\DeclareFontFamily{U}{mathx}{\hyphenchar\font45}
\DeclareFontShape{U}{mathx}{m}{n}{
      <5> <6> <7> <8> <9> <10>
      <10.95> <12> <14.4> <17.28> <20.74> <24.88>
      mathx10
      }{}
\DeclareSymbolFont{mathx}{U}{mathx}{m}{n}
\DeclareMathAccent{\widecheck}{0}{mathx}{"71}
\theoremstyle{plain}
\newtheorem{thm}{Theorem}[section]
\newtheorem{cor}{Corollary}[thm]
\newtheorem{lem}[thm]{Lemma}
\newtheorem{prop}[thm]{Proposition}
\theoremstyle{definition}
\newtheorem{exmp}{Example}[section]
\theoremstyle{remark}
\newtheorem*{rem}{Remark}
\newcommand\inner[2]{\langle #1, #2 \rangle}
\newcommand{\tco}{\mathcal{T}^1}
\newcommand{\bo}{B(L^2(\reals^d))}
\newcommand{\reals}{\mathbb{R}}
\newcommand{\HS}{L^2(\reals^d)}
\newcommand{\F}{\mathcal{F}}
\newcommand{\SC}{\mathcal{T}}
\newcommand{\tr}{\mathrm{tr}}
\begin{document}
\pagestyle{plain}
\title{Convolutions for Berezin quantization and Berezin-Lieb inequalities}
\author{Franz Luef}
\author{Eirik Skrettingland} 
\address{Department of Mathematics\\ NTNU Norwegian University of Science and
Technology\\ NO–7491 Trondheim\\Norway}
\email{franz.luef@math.ntnu.no, eirikskr@stud.ntnu.no}
\keywords{Berezin quantization, Berezin-Lieb inequalities, Fourier-Wigner transform, phase space representations }
\subjclass{47G30; 35S05; 46E35; 47B10}
\begin{abstract}
Concepts and results from quantum harmonic analysis, such as the convolution between functions and operators or between two operators, are  
identified as the appropriate setting for Berezin quantization and Berezin-Lieb inequalities. Based on this insight we provide a rigorous approach to generalized phase-space representation introduced by Klauder-Skagerstam and their variants of Berezin-Lieb inequalities in this setting. Hence our presentation of the results of Klauder-Skagerstam gives a more conceptual framework, which yields as a byproduct an interesting perspective on the connection between Berezin quantization and Weyl quantization. 
\end{abstract}
\maketitle \pagestyle{myheadings} \markboth{F. Luef and E. Skrettingland}{Convolutions for Berezin quantization}
\thispagestyle{empty}
\section{Introduction}
Phase-space representations are of interest for problems in quantum mechanics but also in time-frequency analysis and mathematics. The Wigner distribution is the most-studied phase space representation
\[ W(\psi,\psi)(x,\omega)=\int_{\mathbb{R}^d}\psi(x+\tfrac{t}{2})\overline{\psi(x-\tfrac{t}{2})}e^{-2\pi i\omega t}dt.\]
There are many other representation of this type, e.g. the short-time Fourier transform, the Born-Jordan distribution, Cohen's class, the Husimi representation and the Glauber-Sudarshan representation. There are various names for the Husimi representation and the Glauber-Sudarshan representation. We collected some of them in the next table.
\begin{table} 
  \begin{tabular}{p{7cm}p{7cm}} \hline
    \textbf{This paper} & \textbf{Mathematical physics} \\
    \hline
    Time-frequency shift $\pi(z)$ & Weyl operator $e^{2\pi i\omega Q-ixP}$ \cite{Klauder:2007,Kiukas:2012gt} \\ 
     Integrated Schr{\"o}dinger representation & Weyl quantization \cite{Klauder:2007} \\ 
     Twisted Weyl symbol, Fourier-Wigner transform & Weyl representation \cite{Klauder:2007} \\ 
     Glauber-Sudarshan representation & anti-Wick symbol \cite{Folland:1989tl}, contravariant Berezin symbol, upper symbol \cite{Simon:1980}, symbol for localization operator \cite{Bayer:2014td} \\ 
     Husimi representation & Berezin transform \cite{Bayer:2014td}, covariant Berezin symbol, lower symbol \cite{Simon:1980} \\
  \hline \end{tabular}
  \caption{A dictionary relating the terminology in this paper to other common terminologies in mathematical physics.} 
\end{table}
The starting point of Klauder and Skagerstam \cite{Klauder:2007} is the identification of the Husimi and Glauber-Sudarshan representations as objects build out of rank-one operators given by Gaussians. Then the main idea is to replace these rank-one operators by some density matrix or more generally a trace class operator $\sigma$. The associated phase space representations $S_\sigma$ and $S_{-\sigma}$ are the aforementioned generalized phase-space representations. 

The main goal of this paper is to develop the theory of these phase-space representations rigorously. The theory of Werner \cite{Werner:1984} seems to be a convenient setting for the results in \cite{Klauder:2007} since the convolution between a function and an operator, and between two operators, is precisely how Klauder-Skagerstam define their generalized phase-space representation. In their definition they assume that the Fourier-Wigner transform is non-zero on phase space. We link this condition to Werner's Tauberian theorems and show that it is equivalent to a desirable property: Any bounded operator on $L^2(\mathbb{R}^d)$ is uniquely determined by its generalized Husimi representation. In addition, we note that one may define the phase space representations of Klauder-Skagerstam even in the case where the Fourier-Wigner transform has some zeros. Finally, we show that any trace class operator may be expressed in terms of a generalized phase-space representation which is a consequence of Cohen-Hewitt's theorem for Banach modules \cite{Luef:2017vs}.

All these results are based on the interplay between the Berezin quantization and the Weyl quantization which is not so much stressed in the existing literature, except for \cite{Landsman:1999}.  In the paper \cite{Klauder:2011} generalizations of the Berezin-Lieb inequalities are established in terms of the generalized phase space representations. We state a more general version of these inequalities and point out that these have been first proved by Werner in \cite{Werner:1984}. 

The proofs of the properties of the generalized phase space representations are done in the framework of \cite{Werner:1984} and they are expressed in terms of convolutions. Consequently, we are in the position to use the Fourier-Wigner transform and other facts from \cite{Werner:1984,Luef:2017vs} in the proofs of the main results of \cite{Klauder:2011}. The presentation of \cite{Werner:1984} in terms of time-frequency analysis in \cite{Luef:2017vs} sheds some light on a remark in \cite{Klauder:2007} on a potential link in this setting. The paper \cite{Werner:1984} builds on the work of Holevo \cite{Holevo:1979}, and the later results have later been extended in the context of covariant observables, which are a special class of positive operator valued measures, by \cite{Busch:1997,Cassinelli:2003,Kiukas:2006,Kiukas:2006a} and the monograph \cite{Busch:2016}. We indicate in the final section how to adapt the definitions of Klauder-Skagerstam to the setting of (unimodular) locally compact groups using results from \cite{Kiukas:2006}.

We close this introduction with a brief discussion of results related to generalized phase space representations in the setting of operational quantum physics. The simultaneous measurement of non-commuting observables using some measurement device necessarily leads to the introduction of some noise, by Heisenberg's uncertainty principle.  Wo\'dkiewicz \cite{Wodkiewicz:1984} has observed that generalized Husimi representations incorporate this noise due to the measuring device and considered a version of the Husimi representation in an attempt to define a phase-space representation of states that may realistically be recorded by experiments. W\"unsche and Buzek \cite{Buzek:1997} followed this line of research and discussed if the generalized Husimi function contains complete information about the measurement of a state and propose a formal reconstruction procedure which asks about the ``inversion" of the generalized Husimi symbol. A consequence of our results shows that this procedure will not always work. This connection and result has also been noted in \cite{Kiukas:2012gt,Busch:2016}.

\section{Prerequisites}
\subsection{Notation and conventions} \label{sec:notation} 
Elements of $\reals^{2d}$ will often be written in the form $z=(x,\omega)$ for $x,\omega\in \reals^d$ and $[z,z^\prime]$ is the standard symplectic form $[z,z^\prime]=\omega_1\cdot x_2-\omega_2 \cdot x_1$ of $z=(x_1,\omega_1)$ and $z^\prime=(x_2, \omega_2)$. For two elements $\xi, \eta$ in some Hilbert space $\mathcal{H}$, we define the operator $\xi \otimes \eta$ on $\mathcal{H}$ by $\xi \otimes \eta (\zeta)=\inner{\zeta}{\eta}\xi$, where $\zeta \in \mathcal{H}$ and $\inner{\cdot}{\cdot}$ is the inner product on $\mathcal{H}$. 
We introduce the parity operator $P$ by  $\check{\psi}(x)=P\psi(x)=\psi(-x)$ for any $x\in \reals^d$ and $\psi:\reals^d\to \mathbb{C}$, and define $\psi^*$ by $\psi^*(x)=\overline{\psi(x)}$.
For $p<\infty$, $\SC^p$ will denote the Schatten $p$-class of operators on the Hilbert space $\HS$ with singular values in $\ell^p$. Furthermore, we define $\SC^{\infty}$ to be $\bo$; all the bounded, linear operators on $L^2(\mathbb{R}^d)$.

\subsection{Time-frequency shifts or Weyl operators} \label{sec:tfshift} 
 If $\psi:\reals^d\to \mathbb{C}$ and $z=(x,\omega)\in \reals^{2d}$, we define the \textit{translation operator} $T_x$ by $T_x\psi(t)=\psi(t-x)$, the \textit{modulation operator} $M_{\omega}$ by $M_{\omega}\psi(t)=e^{2 \pi i \omega \cdot t} \psi(t)$ and the \textit{time-frequency shifts} $\pi(z)$ by $\pi(z)=M_{\omega}T_x$. The time-frequency shifts are fundamental in time-frequency analysis, but they also play an important role in quantum mechanics as $\pi(x,\omega)=e^{2\pi i\omega Q-ixP}$ where $Q$ and $P$ are the canonical operators satisfying the uncertainty relation $\left[Q,P\right]=i$ (we let $\hbar=1$)\cite{Hall:2013vm}. In physics these operators are sometimes called the \textit{Weyl operators} $U(x,\omega)$ in \cite{Kiukas:2012gt,Klauder:2007}. 
For $\psi,\phi\in L^2(\reals^d)$ the \textit{short-time Fourier transform} (STFT) $V_{\phi}\psi$ of $\psi$ with window $\phi$ is the function on $\reals^{2d}$
 defined by
 \begin{equation*}
  V_{\phi}\psi(z)=\inner{\psi}{\pi(z)\phi}
\end{equation*}
for $z\in \reals^{2d}$. 
\subsection{The integrated Schr{\"o}dinger representation and twisted convolution} \label{sec:intschrodrep}
 The \textit{integrated Schr{\"o}dinger representation} is the map $\rho: L^1(\reals^{2d})\to \bo$ given by 
	\begin{equation*}
  \rho(f)=\iint_{\reals^{2d}} f(z) e^{-\pi i x \cdot \omega} \pi(z) \ dz,
\end{equation*}
where the integral is defined in the weak sense by requiring that 
\begin{equation*}
  \inner{\eta}{\rho(f)\psi}= \iint_{\reals^{2d}} f(z) e^{-\pi i x \cdot \omega} \inner{\eta}{\pi(z)\psi} \ dz
\end{equation*}
for any $\psi, \eta \in \HS$ (we refer to \cite{Luef:2017vs} for a detailed exposition of weak integration in this context). We say that $f$ is the \textit{twisted Weyl symbol} of $\rho(f)$. In the physics literature this is sometimes referred to as the Weyl representation of an operator \cite{Klauder:2007}, and the map $f\mapsto \rho(f)$ is a version of the \textit{Weyl quantization}.

We will use the important product formula $\rho(f)\rho(g)=\rho(f \natural g)$, where the product $\natural$ is the \textit{twisted convolution}, defined by 
$
  f\natural g (z) = \iint_{\reals^{2d}} f(z-z')g(z') e^{\pi i [z,z']} \ dz'
$
for $f,g \in L^1(\reals^{2d})$  \cite{Folland:1989tl,Grochenig:2001}.

For this paper it is essential that $\rho$ may be extended to a unitary operator from $L^2(\reals^{2d})$ to $\SC^2$, and that the twisted convolution  $f\natural g$ may be defined for $f,g \in L^2(\reals^{2d})$ with norm estimate $\|f\natural g\|_{L^2}\leq \|f\|_{L^2}\|g\|_{L^2}$, see \cite{Folland:1989tl}.

\subsection{Berezin quantization: Husimi and Glauber-Sudarshan representations}
Let $\phi$ be the Gaussian $\phi(x)=2^{d/4}e^{-\pi x\cdot x}$. If $A\in \bo$ can be represented using a vector-valued integral as 
\begin{equation} \label{eq:glaubersudarshan}
  A\psi= \iint_{\reals^{2d}} A_{G-S}(z) \cdot V_{\phi}\psi(z) \pi(z) \phi \ dz
\end{equation}
for some function $A_{G-S}$ on $\reals^{2d}$, we call $A_{G-S}$ the \textit{Glauber-Sudarshan representation} of $A$\cite{Klauder:2007}. On the other hand, the mapping $A_{G-S}\mapsto A$, with $A$ defined by equation \eqref{eq:glaubersudarshan}, defines a quantization procedure known as \textit{Berezin quantization} \cite{Landsman:1999}. In this sense $A\mapsto A_{G-S}$ is the inverse of Berezin quantization.
Furthermore, the \textit{Husimi representation} of $A\in \bo$ is the function $A_H$ given by
\begin{equation*}
  A_H(z)=\inner{A\pi(z)\phi}{\pi(z)\phi}
\end{equation*}
for $z\in \reals^{2d}$\cite{Klauder:2007}, also known as the Berezin transform of $A$ \cite{Bayer:2014td}. 
\subsection{Convolutions of operators and functions} \label{sec:werner}
The mains tools of this paper will be provided by the theory of convolutions of operators and functions due to Werner \cite{Werner:1984}. In order to introduce these convolution operations, we will first need to define a shift for operators. For $z\in \reals^{2d}$ and $A\in \bo$, we define the operator $\alpha_z(A)$ by
\begin{equation*}
  \alpha_z(A)=\pi(z)A\pi(z)^*.
\end{equation*}
It is easily confirmed that $\alpha_z\alpha_{z'}=\alpha_{z+z'}$, and we will informally think of $\alpha$ as a shift or translation of operators. 

Similarly we define the analogue of the involution $f\mapsto \check{f}$ of a function, for an operator $A\in \bo$ by
\begin{equation*}
  \check{A}=PAP,
\end{equation*}
where $P$ is the parity operator. 

Using $\alpha$, Werner defined a convolution operation between functions and operators \cite{Werner:1984}. If $f \in L^1(\reals^{2d})$ and $S \in \tco$ we define the \textit{operator} $f\ast S$ by
\begin{equation} \label{eq:operatorconv}
  f\ast S := S\ast f = \iint_{\reals^{2d}}f(y)\alpha_y(S) \ dy
\end{equation}
where the integral is interpreted in the weak sense. Then $f\ast S \in \tco$ and $\|f\ast S \|_{L^1}\leq \|f \|_{L^1}\|S\|_{\tco}$ \cite[Prop. 2.5]{Luef:2017vs}.

For two operators $S,T \in \tco$, Werner defined the \textit{function} $S\ast T$ by 
\begin{equation} \label{eq:convolutionoftwooperators}
  S \ast T(z) = \tr(S\alpha_z (\check{T}))
\end{equation}
for $z\in \reals^{2d}$. The following result shows that $S\ast T\in L^1(\reals^{2d})$ and provides an important formula for its integral \cite[Lem. 3.1]{Werner:1984}.
\begin{lem}
\label{lem:werner}
Let $S,T \in \tco$. The function $z \mapsto \tr(S\alpha_zT)$ for $z\in \reals^{2d}$ is integrable and $\|\tr(S\alpha_zT)\|_{L^1} \leq \|S\|_{\tco} \|T\|_{\tco}$.

Furthermore,
\begin{equation*}
	\iint_{\reals^{2d}} \tr(S\alpha_zT) \ dz = \tr(S)\tr(T).
\end{equation*}
\end{lem}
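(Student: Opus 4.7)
The plan is to reduce everything to rank-one operators by using the singular value decomposition, then identify $\mathrm{tr}(S\alpha_z T)$ with a product of short-time Fourier transforms and exploit the orthogonality (Moyal) relations for the STFT.

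First I would write $S$ and $T$ in Schatten--$1$ form: $S = \sum_k s_k\, \xi_k \otimes \eta_k$ and $T = \sum_j t_j\, \zeta_j \otimes \rho_j$, where the $\xi_k, \eta_k, \zeta_j, \rho_j$ are unit vectors in $L^2(\mathbb{R}^d)$, $s_k, t_j \geq 0$, $\sum_k s_k = \|S\|_{\mathcal{T}^1}$, and $\sum_j t_j = \|T\|_{\mathcal{T}^1}$. Since $\pi(z)$ is unitary, the identity $\pi(z)(\zeta \otimes \rho)\pi(z)^* = (\pi(z)\zeta)\otimes(\pi(z)\rho)$ gives, for rank-one $S_0=\xi\otimes\eta$ and $T_0=\zeta\otimes\rho$,
\begin{equation*}
  S_0\,\alpha_z(T_0) \;=\; \langle \pi(z)\zeta,\eta\rangle\, \xi \otimes \pi(z)\rho,
\end{equation*}
which is a rank-one operator whose trace is $\langle \pi(z)\zeta,\eta\rangle\,\langle \xi,\pi(z)\rho\rangle = \overline{V_\zeta\eta(z)}\,V_\rho\xi(z)$. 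In particular this is a continuous (hence measurable) function of $z$.

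The two assertions now follow from standard STFT facts. By Cauchy--Schwarz and Moyal's identity $\|V_g f\|_{L^2(\mathbb{R}^{2d})} = \|f\|\,\|g\|$,
\begin{equation*}
  \iint_{\reals^{2d}} |\mathrm{tr}(S_0\alpha_z T_0)|\,dz \;\leq\; \|V_\zeta\eta\|_{L^2}\,\|V_\rho\xi\|_{L^2} \;=\; 1,
\end{equation*}
and the polarised Moyal identity $\iint V_\rho\xi(z)\,\overline{V_\zeta\eta(z)}\,dz = \langle \xi,\eta\rangle\,\overline{\langle \rho,\zeta\rangle}$ evaluates the integral to $\langle \xi,\eta\rangle\,\langle \zeta,\rho\rangle = \mathrm{tr}(S_0)\,\mathrm{tr}(T_0)$. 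Plugging the decompositions of $S,T$ back in and multiplying out,
\begin{equation*}
  \mathrm{tr}(S\alpha_z T) \;=\; \sum_{k,j} s_k t_j\, \overline{V_{\zeta_j}\eta_k(z)}\,V_{\rho_j}\xi_k(z),
\end{equation*}
and the $L^1$-bound on each term, combined with $\sum_{k,j} s_k t_j = \|S\|_{\mathcal{T}^1}\|T\|_{\mathcal{T}^1}$, gives absolute convergence in $L^1(\mathbb{R}^{2d})$, hence both integrability and the required norm estimate. Interchanging sum and integral (justified by Fubini--Tonelli applied to the dominating sum) and using the rank-one computation on each term yields $\iint \mathrm{tr}(S\alpha_z T)\,dz = \sum_{k,j} s_k t_j\, \langle \xi_k,\eta_k\rangle\,\langle \zeta_j,\rho_j\rangle = \mathrm{tr}(S)\,\mathrm{tr}(T)$.

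The main technical point -- the only place where something could genuinely go wrong -- is justifying the rearrangement: that $\mathrm{tr}(S\alpha_z T)$ agrees a.e.\ with the absolutely convergent double sum above, and that term-wise integration is valid. Both follow immediately once the sums are shown to converge in $\mathcal{T}^1$ uniformly in $z$ (which they do, since $\alpha_z$ is an isometry of $\mathcal{T}^1$), combined with continuity of the trace functional on $\mathcal{T}^1$. Everything else is a direct application of Moyal's identity.
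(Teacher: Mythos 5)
Your proof is correct and complete. The paper itself does not prove this lemma --- it is quoted from Werner's paper (Lem.\ 3.1 of \cite{Werner:1984}) --- but your route, reducing to rank-one operators via the singular value decomposition and then invoking the Moyal orthogonality relations for the short-time Fourier transform, is essentially the standard argument (and is in substance how Werner proves it, via the square-integrability of the representation $z\mapsto\pi(z)$). The two technical points you flag --- absolute convergence of the double series in $L^1(\reals^{2d})$ dominated by $\sum_{k,j}s_kt_j$, and the identification of $\tr(S\alpha_zT)$ with that series via $\SC^1$-convergence and continuity of the trace --- are exactly the right ones and are handled adequately.
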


The convolutions can be defined on different $L^p$-spaces and Schatten $p$-classes \cite{Luef:2017vs,Werner:1984}.

\begin{prop} \label{prop:convschatten}
	Let $1\leq p,q,r \leq \infty$ be such that $\frac{1}{p}+\frac{1}{q}=1+\frac{1}{r}$. If $f\in L^p(\reals^{2d}), g\in L^q(\reals^{2d}), S \in \SC^p$ and $T\in \SC^q$. Then the following convolutions may be defined and satisfy the norm estimates
	\begin{align*}
  		\|f\ast T\|_{\SC^r} &\leq \|f\|_{L^p} \|T\|_{\SC^q}, \\
  		\|g\ast S\|_{\SC^r} &\leq \|g\|_{L^q} \|S\|_{\SC^p}, \\
  		\|S\ast T\|_{L^r} &\leq \|S\|_{\SC^p} \|T\|_{\SC^q}.  		
	\end{align*}
	\end{prop}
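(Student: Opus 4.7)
The plan is to derive all three Young-type estimates from a handful of elementary endpoints together with bilinear complex interpolation; the exponent condition $\tfrac{1}{p}+\tfrac{1}{q}=1+\tfrac{1}{r}$ is exactly the line swept out by Riesz--Thorin applied to a bilinear map on the product of two interpolation scales. The second inequality is the same as the first after relabelling exponents and using that the convolution of a function with an operator is symmetric ($f\ast T = T\ast f$), so only the first and the third estimates require independent treatment.

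First I would verify the endpoints by hand. For $(f,T)\mapsto f\ast T$ the two endpoints I would check are $(p,q,r)=(1,q,q)$ for arbitrary $q\in[1,\infty]$ and $(p,q,r)=(\infty,1,\infty)$. The former follows from the integral representation \eqref{eq:operatorconv} together with the isometric invariance $\|\alpha_y(T)\|_{\SC^q}=\|T\|_{\SC^q}$ (since $\pi(z)$ is unitary) inside a Bochner-type triangle inequality; this extends the $q=1$ estimate already recorded in Section~\ref{sec:werner}. The latter follows from the trivial chain $\|f\ast T\|_{\SC^\infty}\le \|f\ast T\|_{\SC^1}\le \|f\|_{L^\infty}\|T\|_{\SC^1}$. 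For $(S,T)\mapsto S\ast T$ the endpoints I would use are $(p,q,r)=(1,1,1)$, which is precisely Lemma~\ref{lem:werner}, and the symmetric pair $(p,q,r)=(\infty,1,\infty)$, $(1,\infty,\infty)$, which come from trace duality:
\[
|(S\ast T)(z)| = |\tr(S\alpha_z(\check T))| \le \|S\|_{\SC^\infty}\|\alpha_z(\check T)\|_{\SC^1} = \|S\|_{\SC^\infty}\|T\|_{\SC^1},
\]
using unitarity of $\pi(z)$ and $P$, and symmetrically with the roles of $S$ and $T$ swapped.

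With these endpoints in hand I would apply the bilinear Riesz--Thorin theorem, invoking the standard fact that the Schatten classes form a complex interpolation scale with $[\SC^{p_0},\SC^{p_1}]_\theta=\SC^{p_\theta}$ and $1/p_\theta=(1-\theta)/p_0+\theta/p_1$. Interpolating one argument at a time---first running $q$ at $p=1$ to land the line $(1,q,q)$, then interpolating this against the single point $(\infty,1,\infty)$ in the first variable---sweeps out the entire admissible region $\tfrac{1}{p}+\tfrac{1}{q}=1+\tfrac{1}{r}$; a parallel two-step interpolation between the three endpoints listed for $S\ast T$ delivers the third estimate.

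The main obstacle is not the interpolation step itself but ensuring that the weak integral in \eqref{eq:operatorconv} converges in $\SC^r$, and that the trace in \eqref{eq:convolutionoftwooperators} is meaningful for $S\in\SC^p$, $T\in\SC^q$ in the intermediate regime, where $S\alpha_z(\check T)$ need not literally be trace class. I would handle this by a density argument: prove the inequalities first for $f\in L^1\cap L^\infty(\reals^{2d})$ and for $S,T$ of finite rank, where all objects are manifestly well-defined and the formal manipulations can be justified directly, and then extend by continuity from the a~priori estimates, using density of finite-rank operators in $\SC^q$ for $q<\infty$ and of $L^1\cap L^\infty$ in $L^p$ for $p<\infty$. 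The cases involving $\SC^\infty$ or $L^\infty$ at the target are treated separately through a weak-$\ast$ argument against the corresponding preduals $\SC^1$ and $L^1$.
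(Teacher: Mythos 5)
Your overall strategy---verify endpoint estimates by hand and sweep out the full line $\tfrac1p+\tfrac1q=1+\tfrac1r$ by bilinear complex interpolation, using $[\SC^{p_0},\SC^{p_1}]_\theta=\SC^{p_\theta}$---is the standard one and is essentially how this proposition is established in the cited sources \cite{Werner:1984,Luef:2017vs}; the paper itself gives no proof, so there is nothing more specific to compare against. Your endpoints for $S\ast T$ (Lemma~\ref{lem:werner} at $(1,1,1)$ and trace duality at $(\infty,1,\infty)$, $(1,\infty,\infty)$) and the $(1,q,q)$ family for $f\ast T$ are correct, and their convex hull does cover the admissible region.

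However, one endpoint verification is genuinely wrong: the chain $\|f\ast T\|_{\SC^\infty}\le \|f\ast T\|_{\SC^1}\le \|f\|_{L^\infty}\|T\|_{\SC^1}$. The second inequality pairs the wrong norms (the triangle inequality in the Bochner integral gives $\|f\ast T\|_{\SC^1}\le\|f\|_{L^1}\|T\|_{\SC^1}$, not an $L^\infty$ bound), and in fact $f\ast T$ for $f\in L^\infty$, $T\in\SC^1$ is typically not trace class at all---Lemma~\ref{lem:poscor}(3) of this very paper gives $1\ast S=\mathcal{I}$, which has infinite trace norm, so the intermediate quantity $\|f\ast T\|_{\SC^1}$ is $+\infty$ and the chain collapses. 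The endpoint estimate $\|f\ast T\|_{\SC^\infty}\le\|f\|_{L^\infty}\|T\|_{\SC^1}$ is nevertheless true, and the correct route is the same duality argument you already deploy for $S\ast T$: for $A\in\SC^1$,
\begin{equation*}
  |\tr\bigl(A(f\ast T)\bigr)|\le \|f\|_{L^\infty}\iint_{\reals^{2d}}|\tr(A\alpha_y(T))|\,dy\le \|f\|_{L^\infty}\|A\|_{\SC^1}\|T\|_{\SC^1},
\end{equation*}
where the last step is Lemma~\ref{lem:werner}, and then one takes the supremum over $\|A\|_{\SC^1}\le1$ using $(\SC^1)^*=\bo$. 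With that repair the argument goes through; the remaining points you raise (defining the convolutions by duality when an $\infty$ exponent occurs, and extending from finite-rank operators and $L^1\cap L^\infty$ by density or weak-$\ast$ continuity) are handled correctly.
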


Since the convolutions between operators and functions can produce both operators and functions as output, the associativity of the convolutions is not trivial \cite{Luef:2017vs, Werner:1984}. 
\begin{prop} \label{prop:associative}
The convolution operations in proposition \ref{prop:convschatten} are commutative and associative.
\end{prop}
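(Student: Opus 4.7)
The strategy is to unfold the definitions and exploit three ingredients: cyclicity of the trace, the cocycle identity $\alpha_y\alpha_{y'}=\alpha_{y+y'}$, and the parity relation $P\pi(z)P=\pi(-z)$, which yields $\alpha_z(\check T)=\check{\alpha_{-z}(T)}$.

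For commutativity, the equality $f\ast S=S\ast f$ is built into~\eqref{eq:operatorconv}, while $S\ast T=T\ast S$ follows from the chain
\[(S\ast T)(z)=\tr(S\alpha_z(\check T))=\tr(S\check{\alpha_{-z}(T)})=\tr(\alpha_{-z}(T)\check S)=\tr(T\alpha_z(\check S))=(T\ast S)(z),\]
in which cyclicity of the trace is used twice to transfer the two copies of $P$ across the product, and the last step uses the elementary identity $\pi(-z)^*A\pi(-z)=\alpha_z(A)$. For associativity there are three groupings according to the function/operator type of the three arguments: $(f,g,S)$, $(f,S,T)$, and $(S,T,R)$. In every case my plan is to expand both associative groupings via their defining weak integrals, pair against test vectors $\psi,\eta\in L^2(\reals^d)$ (or against a third trace-class operator) whenever the outcome is an operator, and invoke Fubini; the absolute integrability required to license this is exactly what the norm estimates of Proposition~\ref{prop:convschatten} furnish. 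The cocycle identity together with the change of variable $y\mapsto y-y'$ then matches the two resulting double integrals. The groupings $(f,g,S)$ and $(f,S,T)$ are essentially routine: the first reduces to the cocycle, and the second additionally uses Lemma~\ref{lem:werner} to pull a trace through the weak integral defining $f\ast S$.

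The main obstacle is the all-operator case $(S\ast T)\ast R=S\ast(T\ast R)$, since both sides are operators defined only weakly. Pairing with $\psi,\eta\in L^2(\reals^d)$ produces
\[\iint\tr(S\alpha_y(\check T))\inner{\eta}{\alpha_y(R)\psi}\,dy\quad\text{and}\quad\iint\tr(T\alpha_y(\check R))\inner{\eta}{\alpha_y(S)\psi}\,dy,\]
whose equality is not transparent and must be extracted through a further use of cyclicity and the cocycle after rewriting $\inner{\eta}{\alpha_y(R)\psi}$ as a trace $\tr((\psi\otimes\eta)\alpha_y(R))$, producing a common double-trace expression on both sides. A conceptually cleaner route, in keeping with the framework of~\cite{Werner:1984,Luef:2017vs} that underlies the paper, is to intertwine both convolutions with pointwise multiplication of scalar functions via the Fourier--Wigner transform; commutativity and all three associativity statements then collapse to the (trivial) commutativity and associativity of pointwise products. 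Passage from the cleanest $(L^1,\tco)$ case to the full Schatten/Lebesgue range of Proposition~\ref{prop:convschatten} is then a standard density argument using the norm estimates quoted there.
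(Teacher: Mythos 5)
Your proposal is correct in substance. Note that the paper itself offers no proof of this proposition: it is quoted from \cite{Werner:1984,Luef:2017vs}, and the argument in those references is exactly your ``conceptually cleaner route'' --- apply $\F$ or $\F_W$ to each grouping, use Proposition \ref{prop:convolutionandft} together with $\F(f\ast g)=\F(f)\F(g)$ and $\F_W(f\ast S)=\F(f)\F_W(S)$, $\F(S\ast T)=\F_W(S)\F_W(T)$, and reduce everything to commutativity and associativity of pointwise multiplication; injectivity of $\F$ on $L^1(\reals^{2d})$ and of $\F_W$ on $\tco$ then finishes the $(L^1,\tco)$ case. Your direct verification of $S\ast T=T\ast S$ via $P\pi(z)P=\pi(-z)$, $\alpha_z(\check T)=\check{\alpha_{-z}(T)}$ and cyclicity of the trace is also correct, and the brute-force route to the all-operator associativity can be pushed through, though it is noticeably messier than the Fourier route. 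Two details deserve care. First, with the paper's convention (inner product linear in the first slot) one has $\tr\bigl(\alpha_y(R)(\psi\otimes\eta)\bigr)=\inner{\alpha_y(R)\psi}{\eta}$, which is the \emph{conjugate} of $\inner{\eta}{\alpha_y(R)\psi}$; the stray conjugation is harmless but must be tracked. Second, the passage to the full range of Proposition \ref{prop:convschatten} is a genuine norm-density argument only when the relevant exponents are finite; the endpoint cases involving $L^\infty(\reals^{2d})$ and $\bo$ are not reached by norm density (as $\tco$ is not norm dense in $\bo$) but by the duality through which those convolutions are defined in the first place, as in the proof of part (3) of Lemma \ref{lem:poscor}, so associativity there is an adjointness statement deduced from the $(L^1,\tco)$ case rather than a limit of it.
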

As was recently pointed out in \cite{Luef:2017vs}, the convolutions make the Schatten classes $\SC^p$ into \textit{Banach modules} over $L^1(\reals^{2d})$ if the module multiplication is defined by $(f,S)\mapsto f\ast S$ for $f\in L^1(\reals^{2d})$ and $S\in \SC^p$. By using the Cohen-Hewitt theorem for Banach modules \cite{Graven:1974}, one obtains that any operator in $\SC^p$ for $p<\infty$ can be written as a convolution\cite[Prop. 7.4]{Luef:2017vs}.
\begin{prop} \label{prop:banachmod}
	Given $T\in \SC^p$ for $p<\infty$, there exists $f\in L^1(\reals^{2d})$ and $S\in \SC^p$ such that $T=f\star S$.
\end{prop}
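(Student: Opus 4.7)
The strategy is to apply the Cohen--Hewitt factorization theorem for Banach modules, exactly as suggested in the excerpt. That theorem says: if $A$ is a Banach algebra with a bounded approximate identity and $M$ is an essential Banach $A$-module (i.e.\ the closed linear span of $A\cdot M$ equals $M$), then every $m\in M$ factors as $m=a\cdot m'$ with $a\in A$ and $m'\in M$. I would take $A=L^1(\reals^{2d})$ with ordinary convolution, which has a well-known bounded approximate identity (e.g.\ an $L^1$-normalized family of bump functions concentrating at the origin), and $M=\SC^p$ with module action $(f,S)\mapsto f\ast S$. The norm estimate $\|f\ast S\|_{\SC^p}\le \|f\|_{L^1}\|S\|_{\SC^p}$ and the associativity $(f\ast g)\ast S=f\ast(g\ast S)$ are exactly Propositions \ref{prop:convschatten} and \ref{prop:associative}, so the module structure is in place.

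The one nontrivial hypothesis to check is that $\SC^p$ is an \emph{essential} $L^1$-module. The standard route is to pick an approximate identity $(\phi_n)\subset L^1(\reals^{2d})$ with $\int\phi_n=1$ and prove that $\phi_n\ast T\to T$ in $\SC^p$ for every $T\in\SC^p$; this shows $T$ lies in the closed span of $L^1\ast\SC^p$. Using the weak integral definition \eqref{eq:operatorconv} together with $\int\phi_n=1$, I would write
\begin{equation*}
\phi_n\ast T-T=\iint_{\reals^{2d}}\phi_n(y)\bigl(\alpha_y(T)-T\bigr)\,dy,
\end{equation*}
and bound
\begin{equation*}
\|\phi_n\ast T-T\|_{\SC^p}\le \iint_{\reals^{2d}}|\phi_n(y)|\,\|\alpha_y(T)-T\|_{\SC^p}\,dy.
\end{equation*}
This tends to zero as soon as $y\mapsto\alpha_y(T)$ is norm continuous at $y=0$.

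The main obstacle is exactly this norm continuity of the translation action on $\SC^p$, and this is where the hypothesis $p<\infty$ is used. For a rank-one operator $\xi\otimes\eta$ one has $\alpha_z(\xi\otimes\eta)=\pi(z)\xi\otimes\pi(z)\eta$, so the strong continuity of the time-frequency shifts $\pi(z)$ immediately gives norm continuity in any $\SC^p$. For a general $T\in\SC^p$ with $p<\infty$ I would approximate $T$ in $\SC^p$ by a finite-rank operator $R$ (finite-rank operators are dense in $\SC^p$ precisely for $p<\infty$), and then split
\begin{equation*}
\|\alpha_y(T)-T\|_{\SC^p}\le \|\alpha_y(T-R)\|_{\SC^p}+\|\alpha_y(R)-R\|_{\SC^p}+\|R-T\|_{\SC^p}.
\end{equation*}
Since $\alpha_y$ is a $\SC^p$-isometry (conjugation by a unitary), the first and third terms are each $\|T-R\|_{\SC^p}$, and the middle term is small for $y$ near zero by the finite-rank case. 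An $\varepsilon/3$-argument then gives norm continuity for every $T\in\SC^p$.

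With essentiality established, Cohen--Hewitt \cite{Graven:1974} produces $f\in L^1(\reals^{2d})$ and $S\in\SC^p$ such that $T=f\ast S$, which is the desired factorization. The reason the statement fails to extend to $p=\infty$ is precisely the breakdown of norm continuity of $\alpha_y$ on $\bo=\SC^{\infty}$: general bounded operators are only strongly, not uniformly, translation-continuous.
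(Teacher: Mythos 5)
Your proposal is correct and follows exactly the route the paper intends: the paper simply cites the Cohen--Hewitt factorization theorem applied to the Banach module structure of $\SC^p$ over $L^1(\reals^{2d})$ (delegating the details to \cite[Prop. 7.4]{Luef:2017vs}), and your verification of essentiality via the norm continuity of $y\mapsto\alpha_y(T)$ on $\SC^p$ for $p<\infty$ (rank-one case plus finite-rank density) is precisely the missing ingredient that makes that application legitimate. Your closing remark on why the argument breaks down for $p=\infty$ is also accurate.
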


\subsection{Fourier transforms for functions and operators} \label{sec:fourier}
For functions $f\in L^1(\reals^{2d})$ we will use the \textit{symplectic Fourier transform} $\F f$, given by 
\begin{equation*}
\F f(z)=\iint_{\reals^{2d}} f(z') e^{-2 \pi i [z,z']} \ dz'
\end{equation*}
for $z\in \reals^{2d}$, where $[.,.]$ is the standard symplectic form $[(x_1,\omega_1),(x_2, \omega_2)]=\omega_1\cdot x_2-\omega_2 \cdot x_1$. 

 For operators $S\in \tco$, the \emph{Fourier-Wigner transform} $\F_WS$ of $S$ is the function given by
	\begin{equation*}
	\F_W S(z)=e^{-\pi i x \cdot \omega}\tr(\pi(-z)S)
	\end{equation*}
	for $z\in \reals^{2d}$. In the terminology of Werner \cite{Werner:1984,Kiukas:2012gt,Keyl:2015bn} this is the Fourier-Weyl transform, but we follow Folland \cite{Folland:1989tl} and call it the Fourier-Wigner transform.

The relationship between the Fourier-Wigner transform and the integrated Schr{\"o}dinger representation will play an important part in our considerations.

\begin{prop} \label{prop:wignerisunitary}
The Fourier-Wigner transform extends to a unitary operator $\F_W: \SC^2\to L^2(\reals^{2d})$. This extension is the inverse operator of the integrated Schr{\"o}dinger representation $\rho $, and 
\begin{equation} \label{eq:twistedconv}
\F_W(ST)=\F_W(S) \natural \F_W(T)
\end{equation}
 for $S,T \in \SC^2$.
\end{prop}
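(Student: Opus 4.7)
The plan is to prove all three claims by reduction to rank-one operators, using Moyal's orthogonality formula for the STFT, and then transferring the $L^1$-level product identity $\rho(f)\rho(g)=\rho(f\natural g)$ to $L^2$ by a density argument.

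To establish the isometric extension, I would start with rank-one operators $\psi\otimes\eta$, which span a dense subspace of $\SC^2$. The identity $\tr(\pi(-z)(\psi\otimes\eta))=\inner{\pi(-z)\psi}{\eta}$ together with $\pi(-z)=e^{2\pi i x\cdot\omega}\pi(z)^\ast$ gives
\[
\F_W(\psi\otimes\eta)(z)=e^{\pi i x\cdot\omega}V_\eta\psi(z).
\]
Moyal's orthogonality formula then yields $\|\F_W(\psi\otimes\eta)\|_{L^2}=\|\psi\|_{L^2}\|\eta\|_{L^2}=\|\psi\otimes\eta\|_{\SC^2}$, and polarization upgrades this to inner-product preservation on finite-rank operators. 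By density, $\F_W$ extends uniquely to a linear isometry $\F_W:\SC^2\to L^2(\reals^{2d})$.

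To identify this extension with $\rho^{-1}$, I would check $\rho(\F_W S)=S$ on the dense set of rank-one operators $S=\psi\otimes\eta$ with $\psi,\eta\in\mathcal{S}(\reals^d)$. For such $S$ the function $\F_W S$ is Schwartz, so the weak-integral definition of $\rho$ applies directly. Testing against $\xi,\zeta\in L^2(\reals^d)$, the phase $e^{\pi i x\cdot\omega}$ in $\F_W S$ cancels the $e^{-\pi i x\cdot\omega}$ built into $\rho$, leaving
\[
\inner{\rho(\F_W S)\xi}{\zeta}=\iint_{\reals^{2d}}V_\eta\psi(z)\,\overline{V_\xi\zeta(z)}\,dz=\inner{\psi}{\zeta}\,\overline{\inner{\eta}{\xi}}=\inner{(\psi\otimes\eta)\xi}{\zeta}
\]
by a second application of Moyal's formula. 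Boundedness of $\rho$ on $L^2$ and of $\F_W$ on $\SC^2$ extends $\rho\circ\F_W=\mathrm{id}$ from this dense set to all of $\SC^2$, and since $\rho:L^2(\reals^{2d})\to\SC^2$ is unitary by Section 2.3, its left inverse $\F_W$ is then unitary as well.

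For the product formula, I would take $S,T\in\SC^2$, set $f=\F_W S$ and $g=\F_W T$, and approximate $f,g$ in $L^2$-norm by sequences $f_n,g_n\in L^1\cap L^2(\reals^{2d})$. The estimates $\|f\natural g\|_{L^2}\leq\|f\|_{L^2}\|g\|_{L^2}$ and $\|ST\|_{\SC^1}\leq\|S\|_{\SC^2}\|T\|_{\SC^2}$, together with the unitarity of $\rho$, let every term in the known $L^1$-level identity $\rho(f_n)\rho(g_n)=\rho(f_n\natural g_n)$ converge in $\SC^2$, giving $\rho(f)\rho(g)=\rho(f\natural g)$. Applying $\F_W$ then yields $\F_W(ST)=f\natural g=\F_W(S)\natural\F_W(T)$. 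The main delicacy is the weak-integral bookkeeping for $\rho$ on Schwartz data and ensuring compatibility of the convergence notions in the two density arguments; both are controlled entirely by the norm estimates already collected in Sections 2.3 and 2.5.
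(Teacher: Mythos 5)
The paper states this proposition as a known prerequisite (citing Werner, Folland and \cite{Luef:2017vs}) and gives no proof of its own, so there is nothing internal to compare against; your argument is correct and is essentially the standard one from those references. The computation $\F_W(\psi\otimes\eta)(z)=e^{\pi i x\cdot\omega}V_\eta\psi(z)$, Moyal's formula, and the passage of $\rho(f)\rho(g)=\rho(f\natural g)$ from $L^1$ to $L^2$ via the norm bounds $\|f\natural g\|_{L^2}\leq\|f\|_{L^2}\|g\|_{L^2}$ and $\|ST\|_{\SC^1}\leq\|S\|_{\SC^2}\|T\|_{\SC^2}$ all check out. The one point to flag is that surjectivity of $\F_W$ (unitarity rather than mere isometry) is not established independently but imported from Section 2.3's assertion that $\rho$ extends to a unitary $L^2(\reals^{2d})\to\SC^2$; that is legitimate given the paper lists this among its prerequisites, but a fully self-contained proof would still owe an argument for that fact (equivalently, for the completeness in $L^2(\reals^{2d})$ of the system $\{V_{e_m}e_n\}$ for an orthonormal basis $\{e_n\}$).
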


The Fourier-Wigner transform shares many properties with the regular Fourier transform\cite{Werner:1984, Luef:2017vs}, and in particular it interacts with the convolutions defined by Werner in the expected way .  

\begin{prop}
	\label{prop:convolutionandft}
	Let $f \in L^1(\reals^{2d})$ and $S,T \in \tco$. 
	\begin{enumerate}
		\item $\F (S \ast T)=\F_W(S)\F_W(T)$.
		\item $\F_W(f \ast S)=\F(f)\F_W(S)$.
	\end{enumerate}	
\end{prop}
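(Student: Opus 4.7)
The plan is to verify both identities by direct computation, combining the product formula $\F_W(AB)=\F_W(A)\natural\F_W(B)$ of Proposition \ref{prop:wignerisunitary} with the covariance relation $\pi(z)\pi(\zeta)\pi(z)^*=e^{2\pi i[z,\zeta]}\pi(\zeta)$, which in turn follows from the Weyl commutation rule $\pi(z)\pi(\zeta)=e^{-2\pi i\omega_\zeta\cdot x_z}\pi(z+\zeta)$. For part (2), I would substitute $f\ast S=\iint f(z)\alpha_z(S)\,dz$ into the definition $\F_W(f\ast S)(w)=e^{-\pi ix_w\omega_w}\tr(\pi(-w)(f\ast S))$ and interchange the trace with the integral. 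Cycling the trace and applying the covariance identity at $\zeta=-w$, namely $\pi(z)^*\pi(-w)\pi(z)=e^{2\pi i[z,w]}\pi(-w)$, the integrand factors as $f(z)\,e^{2\pi i[z,w]}\tr(\pi(-w)S)$. The remaining $z$-integral equals $\F(f)(w)$ (since $[z,w]=-[w,z]$), and combining with the scalar prefactor yields $\F_W(f\ast S)(w)=\F(f)(w)\F_W(S)(w)$.

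For part (1), I would first observe that $\tr(A)=\F_W(A)(0)$ (setting $z=0$ in the definition) and use cyclicity of the trace to write
\[(S\ast T)(z)=\tr(\alpha_{-z}(S)\check T)=\F_W\bigl(\alpha_{-z}(S)\check T\bigr)(0).\]
The product formula then turns the right-hand side into a twisted convolution evaluated at the origin. Two auxiliary identities, both derived in the style of part (2), are needed: the covariance $\F_W(\alpha_y(A))(\xi)=e^{2\pi i[y,\xi]}\F_W(A)(\xi)$, and the parity rule $\F_W(\check T)(\xi)=\F_W(T)(-\xi)$, which follows from $P\pi(\xi)P=\pi(-\xi)$ together with cyclicity. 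Substituting these into the formula $(f\natural g)(0)=\iint f(-\xi)g(\xi)\,d\xi$ and changing variables $\xi\mapsto-\xi$ collapses the expression to
\[(S\ast T)(z)=\iint e^{-2\pi i[z,\xi]}\F_W(S)(\xi)\F_W(T)(\xi)\,d\xi=\F\bigl(\F_W(S)\F_W(T)\bigr)(z).\]
Applying $\F$ once more and using $\F^2=\mathrm{id}$ gives $\F(S\ast T)=\F_W(S)\F_W(T)$.

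The main obstacle is the rigorous justification of the trace/weak-integral interchanges and ensuring that the intermediate objects lie in the correct spaces. For $S,T\in\tco$ one has $\F_W(S),\F_W(T)\in L^2(\reals^{2d})\cap L^\infty(\reals^{2d})$, their pointwise product is integrable, and $\alpha_{-z}(S)\check T\in\SC^1\subset\SC^2$, so Proposition \ref{prop:wignerisunitary} applies. The Schatten--Lebesgue estimates of Proposition \ref{prop:convschatten} legitimize the interchanges; any residual subtleties can be absorbed by first checking the identities on rank-one operators built from Schwartz vectors, where all manipulations reduce to classical Fourier analysis, and then extending by continuity.
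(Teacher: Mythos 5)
Your proof is correct. The paper itself does not prove this proposition --- it is stated as a quoted result with references to Werner and to Luef--Skrettingland --- so there is no in-paper argument to compare against; your computation is essentially the standard direct verification found in those references. The key steps all check out with the paper's conventions: from $\pi(z)\pi(\zeta)=e^{-2\pi i x_z\cdot\omega_\zeta}\pi(z+\zeta)$ one gets $\pi(z)\pi(\zeta)\pi(z)^*=e^{2\pi i[z,\zeta]}\pi(\zeta)$ with $[z,\zeta]=\omega_z\cdot x_\zeta-\omega_\zeta\cdot x_z$, which gives both part (2) and the covariance rule $\F_W(\alpha_y(A))(\xi)=e^{2\pi i[y,\xi]}\F_W(A)(\xi)$; the parity rule $\F_W(\check T)(\xi)=\F_W(T)(-\xi)$ follows from $P\pi(-\xi)P=\pi(\xi)$ together with the evenness of the phase $e^{-\pi i x\cdot\omega}$; and $(f\natural g)(0)=\iint f(-\xi)g(\xi)\,d\xi$ since the twisting phase vanishes at the origin. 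Two small points worth making explicit if you write this up: (i) the identity $\F_W(AB)=\F_W(A)\natural\F_W(B)$ holds a priori only almost everywhere for $A,B\in\SC^2$, so evaluating it at the single point $0$ requires observing that both sides are continuous functions (the left side because $z\mapsto\tr(\pi(-z)AB)$ is continuous for $AB\in\SC^1$, the right side by continuity of translation in $L^2$); and (ii) the interchange of trace and integral in part (2) is cleanest if you note that $z\mapsto f(z)\alpha_z(S)$ is Bochner integrable in $\SC^1$, so the bounded functional $A\mapsto\tr(\pi(-w)A)$ passes through the integral. Your fallback of first checking everything on rank-one operators built from Schwartz vectors and extending by density and the norm estimates of Proposition \ref{prop:convschatten} also handles both points.
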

This proposition is not merely aesthetically pleasing, but expresses the connection between Weyl quantization and Berezin quantization, as we will see in example \ref{exmp:berweyl}. 

\section{Generalized Husimi and Glauber-Sudarshan representations as convolutions}
In \cite{Klauder:2007}, Klauder and Skagerstam proposed extensions of the Husimi and Glauber-Sudarshan representations of operators. Given a fixed operator $\sigma\in \tco$ such that $\F_W(\sigma)$ has no zeros, they obtained for $S\in \tco$ a generalized Husimi representation $S_{\sigma}$ and a generalized Glauber-Sudarshan representation $S_{-\sigma}$. The representations $S_{\sigma}$ and $S_{-\sigma}$ generalize the Husimi and Glauber-Sudarshan representation in the sense that $S_{\sigma}=S_H$ and $S_{-\sigma}=S_{G-S}$ when $\sigma =\phi \otimes \phi$ for $\phi(x)=2^{d/4}e^{-\pi x\cdot x}$. The main result of this paper is that the generalized phase space representations may be described using Werner's convolutions. The next theorem makes this precise, and part (3) uses the associativity of convolutions to reprove a relation between $S_{\sigma}$ and $S_{-\sigma}$ due to Klauder and Skagerstam \cite{Klauder:2007}. In the proof we sketch the formal calculations used by Klauder and Skagerstam \cite{Klauder:2007} to obtain the generalized phase space representations. 

\begin{thm} \label{thm:main}
 Fix $\sigma\in \tco$ and let $S\in \tco$. 
 \begin{enumerate}
\item $S_{\sigma}=S\ast \check{\sigma}$.  
\item If $S=f\ast \sigma^*$ for some $f\in L^1(\reals^{2d})$, then $S_{-\sigma}=f$. 
\item If $S_{-\sigma}\in L^1(\reals^{2d})$ exists, then by parts (1) and (2) $$S_{\sigma}=(S_{-\sigma}\ast \sigma^*)\ast \check{\sigma}=S_{-\sigma}\ast (\sigma^*\ast \check{\sigma})=S_{-\sigma}\ast \tr(\sigma^* \alpha_z \sigma).$$
 \end{enumerate}
Furthermore, if $\sigma=\phi\otimes \phi$ where $\phi(x)=2^{d/4}e^{-\pi x\cdot x}$ for $x\in \reals^d$, then $S_{\sigma}=S_H$ and $S_{-\sigma}=S_{G-S}$.
 	
 \end{thm}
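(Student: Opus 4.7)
The plan is to unwind the Klauder--Skagerstam definitions of the generalized Husimi and Glauber--Sudarshan representations and identify each as a Werner convolution, after which part (3) is a one-line application of associativity.

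For part (1), I would begin by recalling the Klauder--Skagerstam generalization: starting from $S_H(z) = \tr(S(\pi(z)\phi \otimes \pi(z)\phi))$, they replace the coherent-state projector $\phi \otimes \phi$ by an arbitrary $\sigma\in \tco$, so that $S_\sigma(z) := \tr(S\pi(z)\sigma\pi(z)^*) = \tr(S\alpha_z(\sigma))$. On the other hand, \eqref{eq:convolutionoftwooperators} gives $S \ast \check{\sigma}(z) = \tr(S \alpha_z(\check{\check{\sigma}}))$, and since $P^2 = I$ we have $\check{\check{\sigma}} = \sigma$, yielding $S_\sigma = S \ast \check{\sigma}$.

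For part (2), the Klauder--Skagerstam generalization of \eqref{eq:glaubersudarshan} declares $S_{-\sigma}$ (when it exists) to be the function for which $S = \iint_{\reals^{2d}} S_{-\sigma}(z)\, \alpha_z(\sigma^*)\, dz$. Comparing with \eqref{eq:operatorconv}, this is precisely $S = S_{-\sigma} \ast \sigma^*$, so the hypothesis $S = f \ast \sigma^*$ is by definition the statement $f = S_{-\sigma}$. Part (3) is then immediate from Proposition \ref{prop:associative}: substituting the expression from (2) into the identity from (1) yields $S_\sigma = (S_{-\sigma} \ast \sigma^*) \ast \check{\sigma} = S_{-\sigma} \ast (\sigma^* \ast \check{\sigma})$, and the pointwise formula $\sigma^* \ast \check{\sigma}(z) = \tr(\sigma^* \alpha_z(\sigma))$ follows once more from \eqref{eq:convolutionoftwooperators} together with $\check{\check{\sigma}} = \sigma$.

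The only remaining task is the Gaussian special case. A short computation gives $\alpha_z(\phi \otimes \phi) = \pi(z)\phi \otimes \pi(z)\phi$, so part (1) specializes to $S_\sigma(z) = \tr(S(\pi(z)\phi \otimes \pi(z)\phi)) = \inner{S\pi(z)\phi}{\pi(z)\phi} = S_H(z)$. Since $(\phi\otimes\phi)^* = \phi\otimes\phi$, evaluating the weak operator integral $S_{-\sigma} \ast \sigma^*$ from \eqref{eq:operatorconv} on $\psi \in \HS$ produces $S\psi = \iint_{\reals^{2d}} S_{-\sigma}(z)\, V_\phi \psi(z)\, \pi(z)\phi\, dz$, which matches \eqref{eq:glaubersudarshan} and forces $S_{-\sigma} = S_{G-S}$. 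The entire conceptual content of the theorem lies in recognizing that the defining integrals of Klauder--Skagerstam are Werner convolutions; no analytic obstacle arises, since the required integrability and boundedness are supplied by Proposition \ref{prop:convschatten}. The mildest point requiring care is the bookkeeping between the weak operator integral \eqref{eq:operatorconv} and the vector-valued integral \eqref{eq:glaubersudarshan}, which amounts to passing the scalar $\inner{\psi}{\pi(z)\phi}$ through the integral sign.
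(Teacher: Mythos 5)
Your proposal is correct and follows essentially the same route as the paper: both proofs amount to unwinding the Klauder--Skagerstam definitions, identifying $S_\sigma(z)=\tr(S\alpha_z(\sigma))$ with $S\ast\check{\sigma}$ via $\check{\check{\sigma}}=\sigma$, reading $S=f\ast\sigma^*$ as the definition of $S_{-\sigma}=f$, and invoking associativity for part (3). The only difference is that you spell out the Gaussian special case ($\alpha_z(\phi\otimes\phi)=\pi(z)\phi\otimes\pi(z)\phi$ and the weak-integral comparison with the Glauber--Sudarshan formula) where the paper leaves it as "one may easily check," and you omit the paper's historical sketch of the formal trace-identity derivation, which is motivational rather than load-bearing.
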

 
 \begin{proof}
 	To obtain the generalized representations, Klauder and Skagerstam started from the known relations
\begin{equation} \label{eq:basisforklauder}
  \tr(S^*T)=\iint_{\reals^{2d}} \overline{\F_W(S)(z')}\F_W(T)(z') \ dz'= \iint_{\reals^{2d}} \overline{\F(S_{G-S})(z')} \F( T_{H})(z') \ dz'
\end{equation}
for $S,T\in \tco$, which we will prove for the sake of completeness using Werner's convolutions in example \ref{exmp:traceandintegral}.
 They then fixed a $\sigma \in \tco$ such that $\F_W(\sigma)$ vanishes nowhere, in order to generalize $S_{G-S}$ to a representation $S_{-\sigma}$, and $T_{H}$ to a representation $T_{\sigma}$. They required that $S_{-\sigma}$ and $T_{\sigma}$ should satisfy the obvious generalization of \eqref{eq:basisforklauder}, and observed that this would hold if $S_{-\sigma}$ and $T_{\sigma}$ were introduced by the following formal calculation based on equation \eqref{eq:basisforklauder}:
 \begin{align*}
  \tr(S^*T)&=\iint_{\reals^{2d}} \left[\frac{\F_W(S)(z')}{\tr(\pi(z')^*\sigma^*)}\right]^* \tr(\pi(z')\sigma)\F_W(T)(z') \ dz' \\
  &:= \iint_{\reals^{2d}} \overline{\F(S_{-\sigma})(z')} \F(T_{\sigma})(z') \ dz' \\
  &=  \iint_{\reals^{2d}} \overline{S_{-\sigma}(z')} T_{\sigma}(z') \ dz',
\end{align*}
where we used Plancherel's theorem in the last step. One may then derive explicit expressions for $S_{-\sigma}$ and $T_{\sigma}$ \cite{Klauder:2007}. The generalized Husimi representation $T_{\sigma}$ is given by $T_{\sigma}(z)=\tr(T\alpha_z(\sigma))$, which clearly equals $T\ast \check{\sigma}$ from the definition of the convolution of operators. The most relevant expression for the generalized Glauber-Sudarshan representation $S_{-\sigma}$, is that when $S=\iint_{\reals^{2d}} f(z) \alpha_z{\sigma^*} \ dz$ for some $f\in L^1(\reals^{2d})$, then $f=S_{-\sigma}$. In other words, if $S=f\ast \sigma^*$, then $f=S_{-\sigma}$. This proves the first two parts of the theorem. Part (3) is a simple consequence of the first two parts and the associativity of Werner's convolutions. Finally, one may easily check that $T_{\sigma}=T_H$ and $S_{-\sigma}=S_{G-S}$ for $\sigma=\phi \otimes \phi$, where $\phi(x)=2^{d/4}e^{-\pi x\cdot x}$.
 \end{proof}
\begin{rem}
	As a simple calculation shows, $S_{\sigma}=S_{G-S}$ when $\sigma=\phi \otimes \phi$ for $\phi(x)=2^{d/4}e^{-\pi x\cdot x}$. Berezin quantization is given by the mapping $S_{G-S}\mapsto S$, and by theorem \ref{thm:main} this mapping could equivalently be written as $f\mapsto f\ast \left(\phi\otimes \phi\right)^*$. By replacing $\phi\otimes \phi$ with any $\sigma\in \tco$, we may for each $\sigma\in \tco$ define a version of Berezin quantization by $f\mapsto f\ast \sigma^*$.
\end{rem}
 Since theorem \ref{thm:main} relates the generalized phase space representations to Werner's convolutions, the theory from sections \ref{sec:werner} and \ref{sec:fourier} may be applied to these representations. From proposition \ref{prop:convschatten} we get precise conditions under which the generalized Husimi and Glauber-Sudarshan representations of an operator $S$ belong to various $L^p$-spaces, depending on which Schatten $p$-classes $S$ and $\sigma$ belong to.
  
  Theorem \ref{thm:main} also shows that $S_{\sigma}$ and $S_{-\sigma}$ may be defined using Werner's convolutions even when $\F_W(\sigma)$ has zeros. Furthermore, it implies that even if $\F_W (\sigma)$ has no zeros, not every operator $S\in \tco$ has a generalized Glauber-Sudarshan representation with respect to $\sigma$.
	\begin{prop}
	Assume $\sigma\in\tco$ and $\F_W (\sigma)$ has no zeros. Then not every operator $S\in \tco$ has a generalized Glauber-Sudarshan representation $S_{-\sigma}$ in $L^1(\reals^{2d})$. 
	\end{prop}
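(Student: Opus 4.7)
The plan is to leverage Theorem~\ref{thm:main}(2), which says that $S_{-\sigma}\in L^1(\reals^{2d})$ exists precisely when $S$ admits a representation $S=f\ast \sigma^*$ for some $f\in L^1(\reals^{2d})$. Accordingly, the proof reduces to producing a single operator $S\in \tco$ that cannot be written in this form.

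My strategy is to pass to the Fourier side. Applying $\F_W$ to a hypothetical identity $S=f\ast \sigma^*$ and invoking Proposition~\ref{prop:convolutionandft}(2) turns the convolution into pointwise multiplication,
\[ \F_W(S)(z)=\F(f)(z)\cdot \F_W(\sigma^*)(z). \]
Since $f\in L^1(\reals^{2d})$, the Riemann-Lebesgue lemma for the symplectic Fourier transform forces $\F(f)$ to lie in $C_0(\reals^{2d})$; so it suffices to pick $S$ whose Fourier-Wigner transform makes this impossible.

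The cleanest choice is $S:=\sigma^*\in \tco$. Then the displayed identity collapses to $\F(f)(z)\F_W(\sigma^*)(z)=\F_W(\sigma^*)(z)$, and once I know that $\F_W(\sigma^*)$ is nowhere vanishing I may divide to obtain $\F(f)\equiv 1$ on $\reals^{2d}$, which contradicts $\F(f)\in C_0(\reals^{2d})$.

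The one nontrivial step, which I expect to be the main (if mild) obstacle, is verifying that $\F_W(\sigma^*)$ inherits the no-zero hypothesis placed on $\F_W(\sigma)$. I plan to settle this by a short direct calculation: combining $\tr(\pi(-z)\sigma^*)=\overline{\tr(\pi(-z)^*\sigma)}$ with the cocycle identity $\pi(-z)^*=e^{-2\pi i\omega\cdot x}\pi(z)$, a phase bookkeeping argument yields $\F_W(\sigma^*)(z)=\overline{\F_W(\sigma)(-z)}$, so the two zero sets coincide up to reflection. Once this lemma is in hand, the chain of implications above completes the proof of the proposition.
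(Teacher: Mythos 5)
Your proof is correct and follows essentially the same route as the paper's: both take $S=\sigma^*$ as the counterexample, apply $\F_W$ to the hypothetical identity $\sigma^*=f\ast\sigma^*$ to get $\F(f)\F_W(\sigma^*)=\F_W(\sigma^*)$, and derive a contradiction with the Riemann--Lebesgue lemma. Your explicit verification that $\F_W(\sigma^*)(z)=\overline{\F_W(\sigma)(-z)}$ (so that the division is legitimate) is a detail the paper leaves implicit here but records later in the proof of Corollary~\ref{cor:tauberianforhusimi}, and it checks out.
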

	\begin{proof}
	The operator $\sigma^*$ has no generalized Glauber-Sudarshan representation. If we assume that there is a $\sigma^*_{-\sigma}\in L^1(\reals^{2d})$ such that $\sigma^*=\sigma^*_{-\sigma} \ast \sigma^*$, then applying the Fourier-Wigner transform to both sides gives that $\F_W(\sigma^*)=\F(\sigma^*_{-\sigma})\F_W(\sigma^*)$, hence $\F (\sigma^*_{-\sigma})=1$, which contradicts the Riemann-Lebesgue lemma.
 \end{proof}
 On the other hand, the theory of Banach modules (Cohen-Hewitt's Theorem) allows us to conclude that any $S\in \tco$ has an integrable generalized Glauber-Sudarshan representation with respect to \textit{some} $\sigma\in \tco$.
 \begin{prop}
 	If $S\in \tco$, there exists some $\sigma \in \tco$ such that $S$ has a generalized Glauber-Sudarshan representation $S_{-\sigma}\in L^1(\reals^{2d})$.
 \end{prop}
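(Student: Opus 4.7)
The plan is to directly invoke Proposition \ref{prop:banachmod} in combination with Theorem \ref{thm:main}(2). Recall that Theorem \ref{thm:main}(2) says: if $S = f \ast \tau$ for some $f \in L^1(\reals^{2d})$ and $\tau \in \tco$, and if we choose $\sigma \in \tco$ so that $\sigma^\ast = \tau$, then $S_{-\sigma} = f$. So the entire problem reduces to writing $S$ as a convolution of an $L^1$-function with a trace class operator.

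First, I would apply Proposition \ref{prop:banachmod} in the case $p=1$. This is the factorization statement coming from the Cohen--Hewitt theorem for the Banach module $\tco$ over $L^1(\reals^{2d})$: it tells us that for any $S \in \tco$ there exist $f \in L^1(\reals^{2d})$ and $\tau \in \tco$ with
\[
S = f \ast \tau.
\]
Next, set $\sigma := \tau^\ast$. Since $\tco$ is closed under taking adjoints, $\sigma \in \tco$, and $\sigma^\ast = (\tau^\ast)^\ast = \tau$. Hence $S = f \ast \sigma^\ast$, and Theorem \ref{thm:main}(2) identifies $S_{-\sigma} = f \in L^1(\reals^{2d})$, which is exactly the desired conclusion.

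There is essentially no obstacle here since all the heavy lifting is done by Proposition \ref{prop:banachmod}; the only subtlety is being careful about where the adjoint appears in Theorem \ref{thm:main}(2), which is why we set $\sigma := \tau^\ast$ rather than $\sigma := \tau$. It may be worth noting for the reader that the operator $\sigma$ obtained this way genuinely depends on $S$, in line with the previous proposition showing that one fixed $\sigma$ (even with $\F_W(\sigma)$ nonvanishing) will not work for every $S \in \tco$.
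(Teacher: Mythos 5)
Your proof is correct and follows essentially the same route as the paper: factor $S = f \ast \tau$ via the Cohen--Hewitt factorization (Proposition \ref{prop:banachmod} with $p=1$) and then apply Theorem \ref{thm:main}(2). In fact your write-up is slightly more careful than the paper's own one-line proof, since you explicitly set $\sigma := \tau^{\ast}$ to place the adjoint where Theorem \ref{thm:main}(2) requires it.
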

 \begin{proof}
 	By proposition \ref{prop:banachmod}, there exists some $f\in L^1(\reals^{2d})$ such that $f\ast \sigma^*$, and by theorem \ref{thm:main} we get that $f=S_{-\sigma}$. 
 \end{proof}
 \begin{rem}
 	An example of a generalized Husimi-representation has recently been considered by Keller in \cite{Keller:2017}, where $\sigma$ is taken to be a finite-rank operator.   
 \end{rem}
 As mentioned in the introduction, the generalized Glauber-Sudarshan representation in \cite{Klauder:2007} has appeared in the literature before the publication of \cite{Klauder:2007}, see for instance chapter VI of \cite{Busch:1997} or \cite[Eq. 2.12]{Buzek:1997}. Hence one could similarly rephrase the phase-space representations discussed in those papers using Werner's convolutions, and in fact the connection between Werner's theory from \cite{Werner:1984} and \cite{Buzek:1997} has already been pointed out in \cite{Kiukas:2012gt}, which we will return to in section \ref{sec:tauberian}.
 \subsection{An operational approach to the generalized Husimi representation} \label{sec:operational}
 Wo\'dkiewicz \cite{Wodkiewicz:1984} considered a version of the Husimi representation in an attempt to define a phase-space representation of states that may realistically be recorded by experiments, and this approach also leads naturally to the introduction of the generalized Husimi functions, see \cite{Buzek:1997}. In particular, this allows us to interpret the generalized Husimi representation from an \textit{operational} point of view, different from the perspective adopted in \cite{Klauder:2007}. The simultaneous measurement of non-commuting observables using some measurement device necessarily leads to the introduction of some noise, by Heisenberg's uncertainty principle. In W\'odkiewicz' interpretation, the generalized Husimi representation incorporates this noise due to the measuring device: namely, if the system is in the state $S\in \tco$ and the measurement device is in the state $\sigma$, then we assign to $S$ the phase-space representation $S_{\sigma}=S\ast \check{\sigma}$. In particular, the regular Husimi representation $S_{H}$ corresponds to a measurement device in the ground state of the harmonic oscillator.

  \subsection{Using convolutions to reprove relations between Weyl and Berezin quantization} 
 By theorem \ref{thm:main} we may now use the theory of Werner's convolutions to shed light on relations between the Weyl, Husimi and Glauber-Sudarshan representations. Known relations between these representations can now be expressed neatly as the relations $\F_W(f\ast T)=\F(f)\F_W(T)$ and $\F(S\ast T)=\F_W(S)\F_W(T)$, or as the associativity of convolutions as in part (3) of theorem \ref{thm:main}.
 \begin{exmp} \label{exmp:traceandintegral}
 	 We will use this approach to prove equation \eqref{eq:basisforklauder}. By proposition \ref{prop:wignerisunitary}, $\F_W(S^*T)(z)=F_W(S^*)\natural \F_W(T)(z)$, and if we evaluate this equation at $z=0$, we get the first part of equation \eqref{eq:basisforklauder}. The second part follows from theorem \ref{thm:main}, which states that $S=S_{-\sigma}\ast \sigma^*$ if $S_{-\sigma}\in L^1(\reals^{2d})$ exists. Using $\F_W(S)=\F_W(S_{-\sigma}\ast \sigma^*)=\F(S_{-\sigma})\F_W(\sigma^*)$, we can write the first part of equation \eqref{eq:basisforklauder} as 
\begin{align*}
     \tr(S^*T)&=\iint_{\reals^{2d}} \overline{\F_W(S)(z')}\F_W(T)(z') \ dz' \\
     &= \iint_{\reals^{2d}} \overline{\F(S_{-\sigma})(z')\F_W(\sigma^*)(z')}\F_W(T)(z') \ dz' \\
     &= \iint_{\reals^{2d}} \overline{\F(S_{-\sigma})(z')}\F_W(\check{\sigma})(z')\F_W(T)(z') \ dz' \\
     &=\iint_{\reals^{2d}} \overline{\F S_{-\sigma}(z')}\F(T_{\sigma})(z') \ dz',
\end{align*}
where the last equality uses $\F_W(\check{\sigma})\F_W(T)=\F(T\ast \check{\sigma})=\F T_{\sigma}$ by theorem \ref{thm:main}. By picking $\sigma=\phi \otimes \phi$ with $\phi(x)=2^{d/4}e^{-\pi x\cdot x}$ we recover the second part of equation \eqref{eq:basisforklauder}.
  \end{exmp}

\begin{exmp} \label{exmp:berweyl}
 $\F_W(f\ast T)=\F(f)\F_W(T)$ provides a known link between Weyl quantization and Berezin quantization. Assume that $S\in \tco$ may be represented by $S=S_{-\sigma}\ast \sigma^*$, where $S_{-\sigma}\in L^1(\reals^{2d})$ and $\sigma \in \tco$. By proposition \ref{prop:wignerisunitary}, $\F_W(S)$ is the twisted Weyl symbol of $S$, and by proposition \ref{prop:convolutionandft}
  \begin{equation*}
  \F_W(S)=\F(S_{-\sigma})\F_W(\sigma^*).
\end{equation*}
 If $\sigma=\phi \otimes \phi$ with $\phi(x)=2^{d/4}e^{-\pi x\cdot x}$, one may calculate that $\F_W(\sigma^*)=e^{2\pi i x \cdot \omega}e^{\frac{\pi}{2}(z\cdot z)}$. In this case $S_{-\sigma}=S_{G-S}$, and we obtain the relation
\begin{equation*}
  \F(S_{G-S})=\F_W (S) e^{-2\pi i x \cdot \omega}e^{-\frac{\pi}{2}(z\cdot z)},
\end{equation*}
which relates the symbol of $S$ in Weyl quantization, $\F_W (S)$, to the symbol of $S$ in Berezin quantization, $S_{G-S}$.
Similar expressions for the Husimi representation are obtained in the same way. 
\end{exmp}
 \section{Berezin-Lieb inequalities}
 In \cite{Klauder:2011}, Klauder and Skagerstam proved and applied a Berezin-Lieb type inequality for their extended Husimi and Glauber-Sudarshan representations $S_{-\sigma}$ and $S_{\sigma}$. Let $\sigma,\sigma'\in \tco$ be positive operators with $\tr(\sigma)=\tr(\sigma')=1$. For an operator $A\in \tco$ and $\beta\in \reals$, Klauder and Skagerstam established that
 \begin{equation} \label{eq:blks}
  \iint_{\reals^{2d}} e^{-\beta A_{\sigma'}(z)} \ dz \leq \tr(e^{-\beta A}) \leq \iint_{\reals^{2d}} e^{-\beta A_{-\sigma}(z)} \ dz.
\end{equation}
By theorem \ref{thm:main}, one might expect that this result can be formulated using Werner's convolutions. In fact, Werner proved this result already in \cite{Werner:1984} in a more general form. Werner's proof uses the following three properties of the convolutions.

\begin{lem} \label{lem:poscor}
	Let $S$ be a positive trace class operator with $\tr(S)=1$, and consider $f\in L^{\infty}(\reals^{2d})$ and $T\in \bo$ with $f\geq 0$ and $T\geq 0$. Then
	\begin{enumerate}
		\item $f\ast S\geq 0$ and $T\ast S\geq 0$.
		\item $\tr(f\ast S)=\iint_{\reals^{2d}} f(z) \ dz$ and $\iint_{\reals^{2d}} T\ast S(z) \ dz = \tr(T)$.
		\item If $1$ denotes the function $1(z)=1$ on $\reals^{2d}$ and $\mathcal{I}$ is the identity operator on $\HS$, then $1\ast S=\mathcal{I}$ and $\mathcal{I}\ast S=1$.
	\end{enumerate}
\end{lem}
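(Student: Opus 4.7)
The plan is to dispatch each part by unpacking the definitions of $f\ast S$ and $T\ast S$ and then appealing to Lemma~\ref{lem:werner}, linearity of the trace, and the Moyal (STFT orthogonality) relations. For part (1), positivity of $f\ast S$ follows by testing against $\psi\in\HS$: since $\alpha_y(S)=\pi(y)S\pi(y)^*\geq 0$ for every $y$ (conjugation by a unitary preserves positivity) and $f(y)\geq 0$, the weak integral $\inner{(f\ast S)\psi}{\psi}=\iint_{\reals^{2d}} f(y)\inner{\alpha_y(S)\psi}{\psi}\,dy$ is nonnegative. For $T\ast S$, observe that $\check S=PSP\geq 0$ because the parity operator $P$ is a self-adjoint unitary, so $\alpha_z(\check S)\geq 0$ for each $z$; then
\begin{equation*}
T\ast S(z)=\tr\bigl(T\alpha_z(\check S)\bigr)=\tr\bigl(\alpha_z(\check S)^{1/2}\,T\,\alpha_z(\check S)^{1/2}\bigr)\geq 0
\end{equation*}
as the trace of a positive operator.

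For part (2), I would commute the trace past the weak integral defining $f\ast S$ to obtain $\tr(f\ast S)=\iint_{\reals^{2d}} f(y)\tr(\alpha_y(S))\,dy=\iint_{\reals^{2d}} f(y)\,dy$, using unitary invariance of $\tr$ and $\tr(S)=1$. The second identity is Lemma~\ref{lem:werner} applied to $T$ and $\check S$ (both trace class): $\iint_{\reals^{2d}} T\ast S(z)\,dz=\tr(T)\tr(\check S)=\tr(T)$, since $\tr(\check S)=\tr(PSP)=\tr(S)=1$.

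For part (3), the identity $\mathcal{I}\ast S(z)=\tr(\alpha_z(\check S))=\tr(\check S)=1$ is immediate from the definition, giving $\mathcal{I}\ast S=1$ as a function. The main obstacle is the dual identity $1\ast S=\mathcal{I}$, because the integral $\iint_{\reals^{2d}}\alpha_y(S)\,dy$ diverges in operator norm and has to be interpreted in the weak sense. My approach is to fix an orthonormal basis $\{e_n\}$ diagonalizing $S=\sum_n\lambda_n\, e_n\otimes e_n$ with $\lambda_n\geq 0$ and $\sum_n\lambda_n=1$; then for any $\psi,\eta\in\HS$ the relation $\inner{\alpha_y(S)\psi}{\eta}=\sum_n\lambda_n\,\inner{\psi}{\pi(y)e_n}\inner{\pi(y)e_n}{\eta}$ yields
\begin{equation*}
\inner{(1\ast S)\psi}{\eta}=\sum_n\lambda_n\iint_{\reals^{2d}}V_{e_n}\psi(y)\,\overline{V_{e_n}\eta(y)}\,dy.
\end{equation*}
The Moyal relation $\iint_{\reals^{2d}}V_\phi\psi(y)\,\overline{V_\phi\eta(y)}\,dy=\|\phi\|^2\inner{\psi}{\eta}$ with $\phi=e_n$ then collapses each inner integral to $\inner{\psi}{\eta}$, giving $\sum_n\lambda_n\inner{\psi}{\eta}=\inner{\psi}{\eta}$, so $1\ast S=\mathcal{I}$. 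The interchange of sum and integral is justified by absolute convergence, using $\sum_n\lambda_n=1$ together with Cauchy--Schwarz on the STFT integrals.
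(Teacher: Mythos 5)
Your argument is correct where it applies, and for the first identity of part (2) and for part (3) it takes a more computational route than the paper: the paper obtains $\tr(f\ast S)=\iint_{\reals^{2d}} f(z)\,dz$ by evaluating $\F_W(f\ast S)=\F(f)\F_W(S)$ at $z=0$, and proves $1\ast S=\mathcal{I}$ from the duality definition $\inner{f\ast S}{T}=\inner{f}{T\ast \check{S}^*}$ combined with Lemma \ref{lem:werner}, whereas you expand $S$ in its eigenbasis and invoke the Moyal/STFT orthogonality relations. Your route is more elementary and self-contained, while the paper's is shorter because it reuses the Fourier--Wigner machinery already in place; both are sound. Part (1), which the paper leaves to the reader, you prove correctly.

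The one genuine slip is in the second identity of part (2): you apply Lemma \ref{lem:werner} to ``$T$ and $\check S$ (both trace class)'', but $T$ is only assumed to be a positive element of $\bo$; it need not be trace class, in which case $\tr(T)=+\infty$ and Lemma \ref{lem:werner} does not apply. You still have to show $\iint_{\reals^{2d}} T\ast S(z)\,dz=+\infty$ in that case. The standard fix (and the one the paper indicates) is to approximate $T$ from below by positive trace class operators $T_n\nearrow T$ (for instance $T_n=T^{1/2}E_nT^{1/2}$ with finite-rank projections $E_n$ increasing to the identity), observe that $T_n\ast S(z)=\tr\bigl(\alpha_z(\check S)^{1/2}\,T_n\,\alpha_z(\check S)^{1/2}\bigr)$ increases pointwise to $T\ast S(z)$, and pass to the limit by monotone convergence together with $\tr(T_n)\nearrow\tr(T)$. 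A parallel remark applies to $\tr(f\ast S)=\iint_{\reals^{2d}} f(z)\,dz$ when $f\notin L^1(\reals^{2d})$: your interchange of trace and integral does go through, but only because everything is nonnegative so Tonelli applies, and it is worth making that explicit since for non-integrable $f$ the operator $f\ast S$ is defined by duality rather than by the $L^1$ weak integral.
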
 
\begin{proof}
	\begin{enumerate}
		\item The straightforward proof is left to the reader.
		\item Consider first $\tr(f\ast S)$. When $f\in L^1(\reals^{2d})$, we can use that $\F_W(f\ast S)(z)=\F(f)(z)\F_W(S)(z)$. Applying this at $z=0$ gives the result. To prove the result when $\int_{\reals^{2d}} f(z) \ dz=\infty$, one can approximate $f$ from below by functions in $L^1(\reals^{2d})$.
		
The second part is lemma \ref{lem:werner} when $T\in \tco$, and when $\tr(T)=\infty$ one can approximate $T$ by trace class operators to prove the result. 
		\item The convolution of $f\in L^{\infty}(\reals^{2d})$ with $S\in \tco$ is defined by duality, using the condition $\inner{f\ast S}{T}=\inner{f}{T\ast \check{S}^*}$ for any $T\in \tco$ \cite{Luef:2017vs}. One easily checks using the definitions and lemma \ref{lem:werner} that $\inner{1}{T\ast \check{S}^*}=\iint_{z\in \reals^{2d}} \overline{\tr(T\alpha_z S^*)} \ dz=\overline{\tr(T)}=\inner{\mathcal{I}}{T}$, hence $1\ast S=\mathcal{I}$. That $\mathcal{I}\ast S=1$ is proved similarly.
	\end{enumerate}
\end{proof}
In words, convolution with a fixed operator $S$ as in lemma \ref{lem:poscor} preserves trace/integral, positivity and identity. This is the key to Werner's proof of the Berezin-Lieb inequality, which now follows in an elaborated version.
\begin{prop} 
Fix a positive trace class operator $S$ with $\tr(S)=1$, and let $T=T^*\in \bo$ and $f=f^*\in L^{\infty}(\reals^{2d})$. If $\Phi$ is a positive, convex and continuous function on a domain containing the spectrums of $T$ and $S\ast T$, then
\begin{equation} \label{eq:blgs} 
  \iint_{\reals^{2d}}\Phi\circ (S\ast T)(z) \ dz \leq \tr(\Phi(T)).
\end{equation}
Similarly, if $\Phi$ is a positive, convex and continuous function on a domain containing the spectrums of $f$ and $f\ast S$, then
\begin{equation} \label{eq:blh}
  \tr(\Phi(f \ast S)) \leq \iint_{\reals^{2d}}\Phi\circ f(z) \ dz.
\end{equation}
\end{prop}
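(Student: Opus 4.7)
The plan is to reduce each inequality to a fiberwise application of scalar Jensen's inequality, with Lemma \ref{lem:poscor} supplying the probability measures Jensen requires: part (1) produces nonnegative weights, and parts (2)--(3) ensure those weights have total mass one.

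For the first inequality, I would start by rewriting $S\ast T(z)=\tr(\alpha_z(\check S)\,T)$ using commutativity of the convolution together with definition \eqref{eq:convolutionoftwooperators}. Since $S\geq 0$ with $\tr(S)=1$, each $\rho_z:=\alpha_z(\check S)$ is a density matrix on $\HS$. Using the spectral resolution $T=\int t\,dE(t)$, the assignment $B\mapsto \tr(\rho_z\,E(B))$ defines a Borel probability measure $\mu_z$ on $\spct(T)$; with respect to it, $S\ast T(z)=\int t\,d\mu_z(t)$ while $\tr(\rho_z\Phi(T))=\int \Phi(t)\,d\mu_z(t)$, and by the definition of operator convolution the latter is exactly $\Phi(T)\ast S(z)$. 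Scalar Jensen then yields the pointwise bound $\Phi(S\ast T(z))\leq \Phi(T)\ast S(z)$, and integrating in $z$ combined with Lemma \ref{lem:poscor}(2) to identify $\iint \Phi(T)\ast S(z)\,dz=\tr(\Phi(T))$ closes the first half.

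For the second inequality, I would fix an orthonormal eigenbasis $\{\phi_k\}$ of the self-adjoint operator $f\ast S$ with corresponding eigenvalues $\lambda_k$, so that $\tr(\Phi(f\ast S))=\sum_k \Phi(\lambda_k)$ and $\lambda_k=\langle (f\ast S)\phi_k,\phi_k\rangle=\iint f(y)\langle \alpha_y(S)\phi_k,\phi_k\rangle\,dy$ by \eqref{eq:operatorconv}. Lemma \ref{lem:poscor} guarantees that $d\nu_k(y):=\langle \alpha_y(S)\phi_k,\phi_k\rangle\,dy$ is a probability measure on $\reals^{2d}$: nonnegativity comes from part (1), while total mass $1$ follows from part (3), since $\iint \alpha_y(S)\,dy = 1\ast S = \mathcal{I}$ gives $\iint d\nu_k(y)=\langle \mathcal{I}\phi_k,\phi_k\rangle=1$. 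Jensen applied to $\nu_k$ with the scalar function $f$ gives $\Phi(\lambda_k)\leq \iint \Phi(f(y))\langle \alpha_y(S)\phi_k,\phi_k\rangle\,dy$; summing in $k$ and invoking $\sum_k \langle \alpha_y(S)\phi_k,\phi_k\rangle=\tr(\alpha_y(S))=\tr(S)=1$ delivers the desired bound.

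The main technical obstacle will be making the spectral setup rigorous in the full generality stated: $T\in\bo$ need not be compact, so one must genuinely work with the projection-valued spectral measure rather than an eigenvalue sum, and $f\ast S$ may fail to admit a complete orthonormal eigenbasis, in which case the second step must be recast via the spectral resolution of $f\ast S$ (or via truncation to a discrete spectral part followed by a monotone limit). The interchanges of sum and integral that appear at the end of each argument are justified by Tonelli, since the positivity hypothesis on $\Phi$ keeps every integrand nonnegative.
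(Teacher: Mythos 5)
Your argument is the classical Berezin--Lieb route (Berezin, Lieb, Simon): rewrite each side as an expectation against a probability measure furnished by Lemma \ref{lem:poscor} and apply scalar Jensen fiberwise. This is genuinely different from the paper's proof, which never touches spectral measures: there one reduces to $\Phi(t)=t_+$ by approximating $\Phi$ with positive piecewise linear convex functions, and then exploits the variational identity $\tr(T_+)=\inf\{\tr(A):A\geq 0,\ A\geq T\}$ together with the fact that convolution with $S$ preserves positivity, order, and trace/integral. What the paper's route buys is precisely that it is insensitive to the spectral type of $T$, $f$ and $f\ast S$; what your route buys is a stronger intermediate statement, namely the pointwise bound $\Phi\bigl(S\ast T(z)\bigr)\leq \bigl(\Phi(T)\ast S\bigr)(z)$, which is of independent interest. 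Your proof of \eqref{eq:blgs} is complete as written: $S\ast T(z)=\tr(\alpha_z(\check S)T)$ by commutativity, $\alpha_z(\check S)$ is a density operator, Jensen applies to the measure $B\mapsto\tr(\alpha_z(\check S)E(B))$ on $\spct(T)$, and Lemma \ref{lem:poscor}(2) converts $\iint \Phi(T)\ast S\,dz$ into $\tr(\Phi(T))$ (including the case where this is $+\infty$).

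The loose end is in \eqref{eq:blh}, and you have correctly identified where it sits but not closed it. The Jensen step gives the \emph{upper} bound $\Phi(\lambda_k)\leq\iint\Phi(f(y))\,d\nu_k(y)$ only for genuine eigenvectors; for an arbitrary orthonormal basis, Peierls' inequality $\sum_k\Phi(\langle A\phi_k,\phi_k\rangle)\leq\tr(\Phi(A))$ runs in the wrong direction, so you cannot simply ``recast via the spectral resolution.'' If $f\ast S$ has continuous spectrum on which $\Phi>0$ (e.g.\ an operator unitarily containing a multiplication operator on a spectral subspace), there is no eigenbasis at all, and $\tr(\Phi(f\ast S))$ may be $+\infty$; your argument then needs to show that $\iint\Phi\circ f\,dz=+\infty$ as well, which the sketched truncation does not deliver. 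The standard repair is the identity $\tr(\Phi(A))=\sup\sum_k\Phi(\langle A\phi_k,\phi_k\rangle)$, the supremum running over all orthonormal bases (proved by refining spectral partitions), after which your computation $\sum_k\iint\Phi(f(y))\langle\alpha_y(S)\phi_k,\phi_k\rangle\,dy=\iint\Phi\circ f\,dz$ finishes the proof for every basis simultaneously; but this identity is a nontrivial lemma that must be stated and proved, and it is exactly the kind of spectral bookkeeping that Werner's $t_+$-reduction is designed to avoid.
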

\begin{proof}
	First we will explain how the proof may be reduced to the case $\Phi(t)=t_+,$ i.e. the function that returns the positive part of $t$. The reader may confirm that the set of functions $\Phi$ where equations \eqref{eq:blgs} and \eqref{eq:blh} hold is a convex cone and closed under taking the supremum. It is also closed under reflection $\Phi\mapsto \check{\Phi}$ and translations $\Phi \mapsto T_x\Phi$ for $x\in \reals$ -- these facts follow from the spectral calculus and that convolutions with $S$ preserve identity. Since $\Phi$ is assumed to be positive, convex and continuous on a compact set, it can be approximated uniformly by positive piecewise linear convex functions \cite[p. 35]{Niculescu:2006}. As is shown in \cite[Thm. 1.5.7]{Niculescu:2006}, any positive piecewise linear convex function can be written as a linear combination of translates and reflections of the function $t_+$ -- hence $\Phi$ can be approximated by such functions. By linearity it is therefore enough to prove the result for $t_+$.	
	 
	 We will restrict the rest of the proof to inequality \eqref{eq:blgs}, inequality \eqref{eq:blh} follows from a similar argument. Observe that by the spectral calculus, 
	\begin{equation*} 
  \tr(T_+)=\inf \{\tr(A):A\geq 0, A\geq T \},
\end{equation*}
since $0\leq T\leq A$ implies $\tr(T)\leq \tr(A)$. Now consider the following calculation:
\begin{align*}
  \inf \{\tr(A):A\geq 0, A\geq T \}&=\inf \left \{ \iint_{\reals^{2d}} A\ast S(z) \ dz:A\geq 0, A\geq T \right \} \\
  &\geq \inf \left \{ \iint_{\reals^{2d}} A\ast S(z) \ dz:A\ast S\geq 0, A\ast S\geq T\ast S \right \} \\
  &\geq \inf \left \{ \iint_{\reals^{2d}} g(z) \ dz:g\geq 0, g\geq T\ast S \right \} \\
  &= \iint_{\reals^{2d}} (T\ast S)_+ \ dz.
\end{align*}
The first equality is simply part (2) of lemma \ref{lem:poscor}. The two inequalities follow since we take the infinum of larger sets: in the first case this is true by part (1) of lemma \ref{lem:poscor}, and in the second case it is trivially true. The final equality follows by simple integration theory.
\end{proof}

To obtain Klauber and Skagerstam's result in equation \eqref{eq:blks}, we set $\Phi(t)=e^{-\beta x}$, $S=\check{\sigma}$ in \eqref{eq:blgs}, and $S=\sigma^*$, $f=T_{-\sigma}$ in \eqref{eq:blh}.

\section{Quantum harmonic analysis on phase space} \label{sec:tauberian}
When introducing the generalized Husimi and Glauber-Sudarshan representations, Klauder and Skagerstam assumed that the set of zeros of $\F_W(\sigma)$ was empty. In Werner's theory of convolutions, there is no reason to assume this in general. In fact, since we claim that $\F_W$ is a kind of Fourier transform, one might hope that imposing conditions on the zero set of $\F_W(\sigma)$ would lead to results analogous to Wiener's Tauberian theorem for functions. Such theorems have in fact been proved in \cite{Werner:1984} and \cite{Kiukas:2012gt}, from which we take the following results. A proof and similar statements may be found in \cite{Werner:1984, Kiukas:2012gt, Luef:2017vs}.

\begin{thm}
	\label{thm:wernerequiv}
Let $\sigma\in \tco$. 
	\begin{itemize}
\item [(a)] The following are equivalent.  
\begin{enumerate}[label=(a\arabic*)]
		\item The set $\{z\in \reals^{2d}:\F_W (\sigma)(z)=0\}$ is empty.
		\item If $f \in L^{\infty}(\mathbb{R}^{2d})$ and $f\ast \sigma=0$, then $f=0$.
		\item If $T \in \bo$ and $T \ast \sigma=0$, then $T=0$.
		\item $ L^1(\reals^{2d})\ast \sigma$ is dense in $\SC^1$.
	\end{enumerate}
\item [(b)] The following are equivalent.  
\begin{enumerate}[label=(b\arabic*)]
		\item The set $\{z\in \reals^{2d}:\F_W(\sigma)(z)=0\}$ has dense complement.
		\item If $f \in L^1(\mathbb{R}^{2d})$ and $f\ast \sigma=0$, then $f=0$.
		\item If $T\in \tco$ and $\sigma\ast T=0$, then $T=0$.
		\item $ L^{\infty}(\reals^{2d})\ast \sigma$ is weak* dense in $\bo$.
	\end{enumerate}
\end{itemize}
\end{thm}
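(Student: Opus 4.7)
The statement is a Wiener-Tauberian theorem for the Fourier-Wigner transform. Via Proposition \ref{prop:convolutionandft}, convolution with $\sigma$ corresponds to multiplication by $\F_W(\sigma)$, so each injectivity condition amounts to ``multiplication by $\F_W(\sigma)$ is injective'' on some test space, and Hahn-Banach duality converts each such injectivity into a density statement. I would close the circle in each of (a) and (b) by combining one direct Fourier argument, one contrapositive via explicit witnesses, and one duality step; part (b) is the cleaner case, since the Fourier transforms involved are honest continuous functions, while (a) forces one to handle $L^\infty$ and $\bo$, whose Fourier transforms are only tempered distributions.

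\textbf{Part (a).} For (a1)$\Rightarrow$(a2) and (a1)$\Rightarrow$(a3), given $f\ast\sigma = 0$ (resp.\ $T\ast\sigma = 0$) the Fourier identities from Proposition \ref{prop:convolutionandft} yield $\F(f)\F_W(\sigma) = 0$ (resp.\ $\F_W(T)\F_W(\sigma) = 0$), interpreted distributionally; since $\F_W(\sigma)$ is continuous and nowhere vanishing, a distributional division forces $\F(f) = 0$ (resp.\ $\F_W(T) = 0$), and unitarity of $\F_W$ gives $f = 0$ (resp.\ $T = 0$). For the converses (a2),(a3)$\Rightarrow$(a1), I would argue by contrapositive with explicit witnesses: if $\F_W(\sigma)(z_0) = 0$, then $T_0 := \pi(z_0) \in \bo$ is a nonzero unitary with $T_0\ast\sigma = 0$, because the covariance $\pi(w)^\ast \pi(z_0)\pi(w) = e^{2\pi i[w,z_0]}\pi(z_0)$ and cyclicity of the trace give $(T_0\ast\sigma)(w) = e^{2\pi i[w,z_0]}\,\tr(\pi(z_0)\check\sigma)$, a phase multiple of $\F_W(\sigma)(z_0) = 0$; analogously, the character $f_0(z) := e^{-2\pi i[z,z_0]} \in L^\infty$ is a nonzero function with $f_0\ast\sigma = 0$. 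For (a3)$\Leftrightarrow$(a4), Hahn-Banach in the $(\SC^1,\bo)$ duality says $L^1\ast\sigma$ is dense in $\SC^1$ iff no nonzero $T \in \bo$ annihilates it; the Fubini computation $\tr(T(f\ast\sigma)) = \iint f(z)(T\ast\check\sigma)(z)\,dz$ identifies this annihilator as $\{T : T\ast\check\sigma = 0\}$, and the ``$\sigma$'' and ``$\check\sigma$'' versions of (a3) coincide because the zero sets of $\F_W(\sigma)$ and $\F_W(\check\sigma)$ differ only by reflection through the origin.

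\textbf{Part (b).} Now all Fourier transforms are genuine continuous functions. If $f \in L^1$ and $f\ast\sigma = 0$, then $\F(f)\F_W(\sigma) \equiv 0$ as a continuous function; by (b1), $\F_W(\sigma) \ne 0$ on a dense set, so $\F(f) = 0$ by continuity, hence $f = 0$. The same argument with $\F_W(T)$ (continuous on $\tco$) yields (b1)$\Rightarrow$(b3). For the converses, if the zero set of $\F_W(\sigma)$ has nonempty interior $U$, pick $g \in L^1$ with $\F(g)$ a smooth bump supported in $U$: then $g\ast\sigma = 0$ with $g \ne 0$ breaks (b2), while $T := g\ast S$ for $S$ a rank-one Gaussian (so that $\F_W(S)$ is nowhere zero and $T \in \tco$ is nonzero) yields $\sigma\ast T = 0$ and breaks (b3). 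Finally (b3)$\Leftrightarrow$(b4) is Hahn-Banach for the weak*-topology on $\bo = (\SC^1)^\ast$: weak*-density of $L^\infty\ast\sigma$ in $\bo$ is equivalent to triviality of the pre-annihilator in $\tco$, which the duality identity $\inner{f\ast\sigma}{T} = \inner{f}{T\ast\check\sigma^\ast}$ from the proof of Lemma \ref{lem:poscor} identifies with the set of $T \in \tco$ satisfying $\sigma\ast T = 0$ (modulo adjoints and parity, which preserve the dense-complement condition on the zero set).

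\textbf{Main obstacle.} The delicate point throughout is that $\F_W$ is a unitary identification only between $\SC^2$ and $L^2$: for $T \in \bo$ or $f \in L^\infty$, the Fourier transforms $\F_W(T)$ and $\F(f)$ are merely tempered distributions, so the Fourier identities of Proposition \ref{prop:convolutionandft} cannot be read off literally as pointwise equations. This is why the converses in (a) are best argued by producing explicit witnesses $T_0 = \pi(z_0)$ and $f_0(z) = e^{-2\pi i[z,z_0]}$; part (b) sidesteps this issue because $L^1$ and $\tco$ do have honest continuous Fourier transforms.
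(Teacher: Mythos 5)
The paper does not actually prove Theorem \ref{thm:wernerequiv}; it cites \cite{Werner:1984,Kiukas:2012gt,Luef:2017vs}, so your proposal has to be judged on its own. Most of it is sound: the witnesses $T_0=\pi(z_0)$ and $f_0(z)=e^{-2\pi i[z,z_0]}$ for the contrapositives in (a), the Hahn--Banach dualities (a3)$\Leftrightarrow$(a4) and (b3)$\Leftrightarrow$(b4) via $\tr\bigl(T(f\ast\sigma)\bigr)=\iint f(z)\,(T\ast\check{\sigma})(z)\,dz$, and all of part (b) -- where every Fourier transform in sight is a continuous function vanishing at infinity, so ``zero on a dense set implies zero'' and the bump-function construction both go through -- are correct in outline.

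The genuine gap is in (a1)$\Rightarrow$(a2) and (a1)$\Rightarrow$(a3). The ``distributional division'' you invoke does not exist: for $f\in L^\infty$, $\F(f)$ is only a tempered distribution, and $\F_W(\sigma)$ is merely continuous (not smooth) and tends to $0$ at infinity, so the product $\F(f)\F_W(\sigma)$ is not even well defined as a distribution, and $1/\F_W(\sigma)$ is certainly not a multiplier of $\mathcal{S}'$. This is exactly the point where the classical Wiener Tauberian theorem is a theorem rather than a triviality -- the naive argument ``$\hat f\hat g=0$ and $\hat g$ nowhere zero, hence $\hat f=0$'' fails for $f\in L^\infty$, $g\in L^1$ for the same reason. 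The standard repair, which is Werner's, is a reduction to the classical Wiener theorem: from $f\ast\sigma=0$ and associativity one gets $f\ast(\sigma\ast\check{\sigma}^{\,*})=0$, where $\sigma\ast\check{\sigma}^{\,*}\in L^1(\reals^{2d})$ has symplectic Fourier transform $|\F_W(\sigma)|^2$, nowhere zero under (a1); the classical Wiener Tauberian theorem then yields $f=0$. For (a1)$\Rightarrow$(a3) one additionally convolves $T\in\bo$ with a rank-one trace-class operator $S$ to land in $L^\infty(\reals^{2d})$, applies the same argument to conclude $T\ast S=0$ for all such $S$, and deduces $T=0$. Without this reduction (or an equivalent Wiener-algebra argument), the hard direction of part (a) is not proved.
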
	
Using theorem \ref{thm:main} we can formulate this result using the generalized Husimi and Glauber-Sudarshan representations. 
\begin{cor} \label{cor:tauberianforhusimi}
Let $\sigma\in \tco$. 
	\begin{itemize}
\item [(a)] The following are equivalent.  
\begin{enumerate}[label=(a\arabic*)]
		\item The set $\{z\in \reals^{2d}:\F_W (\sigma)(z)=0\}$ is empty.
		\item If $T\in \bo$ has a generalized Glauber-Sudarshan symbol $T_{-\sigma} \in L^{\infty}(\mathbb{R}^{2d})$, then $T_{-\sigma}\in L^\infty(\reals^{2d})$ is unique.
		\item Any $T\in \bo$ is uniquely determined by its generalized Husimi representation. Hence, if we have $T \in \bo$ with $T_{\sigma}=0$, then $T=0$. 
		\item The set of $T\in \tco$, where $T_{-\sigma}\in L^1(\reals^{2d})$ exists, is dense in $\tco$.
	\end{enumerate}
\item [(b)] The following are equivalent.  
\begin{enumerate}[label=(b\arabic*)]
		\item The set $\{z\in \reals^{2d}:\F_W(\sigma)(z)=0\}$ has dense complement.
		\item  If $T\in \tco$ has a generalized Glauber-Sudarshan symbol $T_{-\sigma} \in L^1(\mathbb{R}^{2d})$, then $T_{-\sigma}\in L^1(\reals^{2d})$ is unique.
		\item Any $T\in \tco$ is uniquely determined by its generalized Husimi representation. Hence, if we have $T \in \tco$ with $T_{\sigma}=0$, then $T=0$. 
		\item The set of $T\in \bo$, where $T_{-\sigma}\in L^{\infty}(\reals^{2d})$ exists, is weak* dense in $\bo$.
	\end{enumerate}
\end{itemize}
\end{cor}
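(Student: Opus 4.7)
The plan is to derive the corollary directly from Theorem \ref{thm:wernerequiv} by translating every statement into the language of generalized phase space representations, using the identifications $T_\sigma = T\ast\check{\sigma}$ and $T_{-\sigma}=f$ whenever $T=f\ast\sigma^*$ from Theorem \ref{thm:main}. The slight twist is that Werner's theorem has to be applied to $\check{\sigma}$ or $\sigma^*$ rather than to $\sigma$ itself, so the preparatory step is to check that the two hypotheses ``$\{z:\F_W(\sigma)(z)=0\}$ is empty'' and ``$\{z:\F_W(\sigma)(z)=0\}$ has dense complement'' are invariant under the involutions $\sigma\mapsto\check{\sigma}$ and $\sigma\mapsto\sigma^*$. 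This reduces to a short computation: using $PM_{-\omega}P=M_\omega$, $PT_{-x}P=T_x$, and $\pi(-z)^*=e^{-2\pi i x\cdot\omega}\pi(z)$, one obtains $\F_W(\check{\sigma})(z)=\F_W(\sigma)(-z)$ and $\F_W(\sigma^*)(z)=\overline{\F_W(\sigma)(-z)}$. The zero sets are therefore just reflected and/or complex-conjugated versions of each other, and in particular emptiness and density of complement are preserved.

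With that invariance in hand, the translation of each equivalence is a direct rewriting. For (a3) and (b3), the hypothesis $T_\sigma=0$ is by definition $T\ast\check{\sigma}=0$, so Theorem \ref{thm:wernerequiv}(a3) or (b3) applied to $\check{\sigma}$ yields the equivalence with (a1) or (b1). For (a2) and (b2), uniqueness of $T_{-\sigma}$ in $L^\infty(\reals^{2d})$ (resp.\ $L^1(\reals^{2d})$) amounts, by Theorem \ref{thm:main}, to the implication $f\ast\sigma^*=0\Rightarrow f=0$ in the same space, which is Theorem \ref{thm:wernerequiv}(a2) (resp.\ (b2)) applied to $\sigma^*$. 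For (a4) and (b4), Theorem \ref{thm:main} identifies the set of $T$ admitting $T_{-\sigma}\in L^1(\reals^{2d})$ (resp.\ $L^\infty(\reals^{2d})$) with the image $L^1(\reals^{2d})\ast\sigma^*$ (resp.\ $L^\infty(\reals^{2d})\ast\sigma^*$), so the density in $\SC^1$ (resp.\ weak* density in $\bo$) follows at once from Theorem \ref{thm:wernerequiv}(a4) (resp.\ (b4)) applied to $\sigma^*$.

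The main obstacle is really only bookkeeping: one must track at each item which involution of $\sigma$ appears on the convolution side, and verify once and for all that the zero-set conditions transfer cleanly through those involutions. Once that preliminary lemma is in place, no further analytic content beyond Theorems \ref{thm:main} and \ref{thm:wernerequiv} is needed, and the corollary is essentially a dictionary translation.
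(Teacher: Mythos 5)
Your proposal is correct and follows essentially the same route as the paper: the paper's proof is precisely the dictionary translation of each item of Theorem \ref{thm:wernerequiv} via Theorem \ref{thm:main}, together with the observation that conditions (a1) and (b1) are invariant under $\sigma\mapsto\check{\sigma}$ and $\sigma\mapsto\sigma^*$ because $\F_W(\check{\sigma})(z)=\F_W(\sigma)(-z)$ and $\F_W(\sigma^*)(z)=\overline{\F_W(\sigma)(-z)}$. Your item-by-item bookkeeping of which involution of $\sigma$ appears on the convolution side is consistent with the paper's (more terse) argument.
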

\begin{proof}
	Using theorem \ref{thm:main}, we see that each part of this corollary is the same part of theorem \ref{thm:wernerequiv}, written in the terminology of generalized Husimi and Glauber-Sudarshan representations.We also use that parts (a1) and (b1) are satisfied for $\sigma$ if and only if they are satisfied for $\check{\sigma}$ or $\sigma^*$, since a simple calculation shows that $\F_W(\check{\sigma})(z)=\F_W(\sigma)(-z)$ and $\F_W(\sigma^*)(z)=\overline{\F_W(\sigma)(-z)}$.
\end{proof}
\begin{rem}
	Klauder and Skagerstam also discuss the case where $\F_{W}(\sigma)$ has zeros\cite{Klauder:2007}, and arrive at results similar to corollary \ref{cor:tauberianforhusimi}. They do not, however, consider which $L^p$-space the generalized Glauber-Sudarshan symbol $T_{-\sigma}$ belongs to.
\end{rem}
In the interpretation of W\'odkiewicz discussed in section \ref{sec:operational}, the generalized Husimi function $S_{\sigma}$ of $S\in \tco$ takes into account the effect of the measurement device, which is assumed to be in the state $\sigma\in \tco$. In this interpretation the equivalence of (b1) and (b3) in the corollary means that if $\{z\in \reals^{2d}:\F_W(\sigma)(z)=0\}$ has dense complement, then the noise from a measurement device in the state $\sigma$ may in theory be removed from the distribution $S_{\sigma}$ to recover the state $S$. This connection and result has been noted in \cite{Kiukas:2012gt,Busch:2016}, who also noted that since there exist operators $\sigma$ such that $\{z\in \reals^{2d}:\F_W(\sigma)(z)=0\}$ does \textit{not} have dense complement, corollary \ref{cor:tauberianforhusimi} contradicts the claim in \cite{Buzek:1997} that the generalized Husimi function $S_{\sigma}$ always "contains complete information about the measured state" \cite[p. 636]{Buzek:1997}, and hence it also shows that a formal reconstruction in \cite{Buzek:1997} procedure to get $S$ from $S_{\sigma}$ will not always work. \\

Finally, we mention that the spectrum of an operator $\sigma$, defined as the closure of $\{z\in \reals^{2d}:\F_W(\sigma)(z)\neq 0\}$, may be identified with the \textit{Arveson spectrum} of $\sigma$ for the automorphism $\alpha$, see \cite{Luef:2017vs}. 
\section{Berezin quantization on other phase spaces}
Theorem \ref{thm:main} shows that the generalized Glauber-Sudarshan and Husimi representations may be expressed using the convolutions of operators and functions. Hence, we see that if we can define these convolutions on more general phase spaces, we would also obtain generalized Glauber-Sudarshan and Husimi representations on these phase spaces. Kiukas et. al \cite{Kiukas:2006} have shown that the integral defining $f\ast \sigma^*$ in equation \eqref{eq:operatorconv} exists in the more general context where $G$ is a locally compact, unimodular\footnote{Kiukas \cite{Kiukas:2006a} has also considered results of this kind for non-unimodular groups $G$.}, second countable, connected Hausdorff group  with Haar measure $\mu_G$. Let $U:G\to B(\mathcal{H})$ be a projective unitary representation of $G$ on some separable Hilbert space $\mathcal{H}$. 
\\
In this case, for any positive $\sigma\in \tco(\mathcal{H})$ with $\tr(\sigma)=1$ and $f\in L^{\infty}(G)$, the vector-valued integral 
\begin{equation} \label{eq:holevo}
  \iint_{G} f(g) U(g)\sigma^* U(g)^* \ d\mu_G(g)
\end{equation}
defines a bounded operator on $\mathcal{H}$ in the ultraweak sense \cite[Thm. 1]{Kiukas:2006} , i.e. the weak* topology on $B(\mathcal{H})$ induced by $B(\mathcal{H})$ being the dual space of $\tco(\mathcal{H})$. Since the generalized Glauber-Sudarshan symbol $S_{-\sigma}$ on $\reals^{2d}$ is given by $S=S_{-\sigma}\ast \sigma^*$ by  theorem \ref{thm:main}, we see that on a group $G$ as above we may consider $f$ to be a Glauber-Sudarshan symbol of $\iint_{G} f(g) U(g)\sigma^* U(g)^* \ d\mu_G(g)$. 
\\
Since Berezin quantization on $\reals^{2d}$ may also be expressed as the convolution operation $f\mapsto f\ast \sigma^*$ with $\sigma=\phi\otimes \phi$ where $\phi(x)=2^{d/4}e^{-\pi x\cdot x}$, we see that each $\sigma\in \tco(\mathcal{H})$ defines a version of Berezin quantization on $G$ by 
\begin{equation*}
  f\mapsto \iint_{G} f(g) U(g)\sigma^* U(g)^* \ d\mu_G(g) \ \text{ for } f\in L^{\infty}(G).
\end{equation*}
  Similarly we may define a generalized Husimi representation $S_{\sigma}$ of $S\in \tco(\mathcal{H})$ on $G$ by $S_{\sigma}(g)=\tr(SU(g)\sigma U(g)^{\ast})$, simply by replacing $\pi(z)$ with $U(g)$ in equation \eqref{eq:convolutionoftwooperators} which defines the convolution of two operators. 
	This Husimi representation will then satisfy an integrability condition similar to lemma \ref{lem:werner} \cite[Lem. 2]{Kiukas:2006}, provided that $G$ is unimodular \cite{Kiukas:2006a}. \\These considerations may also be formulated in terms of covariant observables\cite{Werner:1984,Holevo:1979,Busch:2016,Cassinelli:2003,Kiukas:2006,Kiukas:2006a}; a special class of positive operator valued measures. In this setting Holevo \cite{Holevo:1979} and later others \cite{Werner:1984,Kiukas:2006a,Kiukas:2006,Cassinelli:2003} proved a "converse" of equation \eqref{eq:holevo} that amounts to a characterization of covariant observables; namely that any map $\Gamma:L^{\infty}(G)\to B(\mathcal{H})$ satisfying certain properties must be given by $f\mapsto \iint_{G} f(g) U(g)\sigma^* U(g)^* \ d\mu_G(g)$ for some positive $\sigma\in \tco(\mathcal{H})$ with $\tr(\sigma)=1$.  A reference for these topics is the book \cite{Busch:2016} and references therein.

\end{document}